\newtheorem{fact}{Fact}
\newtheorem{remark}{Remark}
  \providecommand\BibTeX{{%
    \normalfont B\kern-0.5em{\scshape i\kern-0.25em b}\kern-0.8em\TeX}}}
\begin{document}

\title{A Statistical Verification Method of Random Permutations for Hiding Countermeasure Against Side-Channel Attacks}

\author{Jong-Yeon Park}
\email{jonyeon.park@samsung.com}
\orcid{}
\affiliation{%
  \institution{Samsung Electronics S.LSI}
  \streetaddress{Samsung Electornics Road 1-1}
  \city{Hawsung}
  \state{Kyeung-Ki Province}
  \country{Korea. Rep}
  \postcode{}
}  
\author{Jang-Won Ju}
\email{jangwon95.ju}
\affiliation{%
  \institution{Samsung Electronics S.LSI}
  \streetaddress{Samsung Electornics Road 1-1}
  \city{Hawsung}
  \state{Kyeung-Ki Province}
  \country{Korea. Rep}
  \postcode{}
}

\author{Wonil Lee}
\email{wonil01.lee}
\affiliation{%
  \institution{Samsung Electronics S.LSI}
  \streetaddress{Samsung Electornics Road 1-1}
  \city{Hawsung}
  \state{Kyeung-Ki Province}
  \country{Korea. Rep}
  \postcode{}
}

\author{Bo-Gyeong Kang}
\email{bogyeong.kang@samsung.com}
\affiliation{%
  \institution{Samsung Electronics S.LSI}
  \streetaddress{Samsung Electornics Road 1-1}
  \city{Hawsung}
  \state{Kyeung-Ki Province}
  \country{Korea. Rep}
  \postcode{}
}

\author{Yasuyuki Kachi}
\email{kachi@u-aizu.ac.jp}
\affiliation{%
  \institution{University of Aizu}
  \country{Japan}
}

\author{Kouichi Sakurai}
\email{sakurai@inf.kyushu-u.ac.jp}
\affiliation{%
  \institution{Kyushu University}
  \country{Japan}
}

\renewcommand{\shortauthors}{Jong-Yeon Park, et al.}

\begin{abstract}
  As NIST is putting the final touches on the standardization of PQC (Post Quantum Cryptography) public key algorithms, it is a racing certainty that peskier cryptographic attacks undeterred by those new PQC algorithms will surface. Such a trend in turn will prompt more follow-up studies of attacks and countermeasures. As things stand, from the attackers’ perspective, one viable form of attack that can be implemented thereupon is the so-called “side-channel attack”. Two best-known countermeasures heralded to be durable against side-channel attacks are: \textit{“masking”} and \textit{“hiding”}. In that dichotomous picture, of particular note are successful single-trace attacks on some of the NIST’s PQC then-candidates, which worked to the detriment of the former: \textit{“masking”}. In this paper, we cast an eye over the latter: \textit{“hiding”}. Hiding proves to be durable against both side-channel attacks and another equally robust type of attacks called “fault injection attacks”, and hence is deemed an auspicious countermeasure to be implemented. Mathematically, the hiding method is fundamentally based on random permutations. There has been a cornucopia of studies on generating random permutations. However, those are not tied to implementation of the hiding method. In this paper, we propose a reliable and efficient verification of permutation implementation, through employing Fisher–Yates’ shuffling method. We introduce the concept of an $n$-th order permutation and explain how it can be used to verify that our implementation is more efficient than its previous-gen counterparts for hiding countermeasures.
\end{abstract}



\keywords{Side Channel Attack, Hiding method, Random Permutation, Post-Quantum Cryptography}


\maketitle

\section{Introduction}
\subsection{Background}
The advent of quantum computing is sweepingly changing the game of cryptography. It is taken as gospel truth that quantum computing will soon evolve to the level it undermines one’s confidence in the pre-existing—once thought-to-be inveterate—encryption algorithms such as RSA and Elliptic curve cryptography (ECC), which have been around since the dawn of the digital communication age. Shor’s research outright favors such a prognosis: Shor brought to light the vulnerability of RSA and ECC against quantum computing~\cite{SOR94}. To cope with such development, in 2016, the National Institute of Standards and Technology (NIST) initiated the standardization of Post-Quantum Cryptography (PQC). After multiple sequential rounds of competition protracted over 6+ years, four groups of algorithm builders that built a key encapsulation mechanism (KEM) and digital signature schemes ultimately secured a win, which was as recently as August 2023~\cite{NIST20}.

As the crypto niche’s own market force (practically NIST) laid down the law to allow only a handful of algorithms to ride through the waves and outlive the contenders, attacks against those “last algorithms standing” are naturally imminent. Among them, side-channel attacks are well-deserving of scrutiny: Side-channel attacks do not target the algorithms directly, but rather go after secondary, collateral, information laid bare through reconnoitering. The best-known countermeasures against side-channel attacks are \textit{“masking”} and \textit{“hiding”}. (We give a brief pr\'ecis of masking and hiding in section~\ref{pre}.) Most notably, the hiding technique employs random permutations of independent unit operations, making it challenging for an attacker to pinpoint the location and the timing of the specific operation being performed.

One noticeable caveat of the NIST-vouchsafed PQC algorithms is the immensity of their polynomial operation processing. Indeed, those algorithms fundamentally rely on the supposed impenetrable nature of some well-chosen lattice problems, where polynomial operations serve as the main executing method. In that setup, the number of unit data blocks grows in proportion to the order of the polynomials processed during the cryptographic operations~\cite{INDO_shuf}. As an example of comparative assessment, while AES operates on 16 blocks, PQC algorithms process more than 32 blocks. Meanwhile, some other PQC algorithms that are not based on lattice problems, such as the code-based cipher HQC~\cite{NIST22}, use vector and matrix operations with independently operated variables. The hiding method is known to mesh well with all these algorithms.  

Given that the NIST-vouchsafed PQC algorithms operate on a large number of unit blocks as compared to their predecessors, it is imperative to verify the process how source code developers for software and hardware designers arrange these blocks when designing the hiding technique. That is crucial as these methods have bearings on both the security and the efficiency of the systems.

In light of the foregoing, in this paper we report the findings on the hiding technique. What fundamentally underpins the hiding method is a mathematical principle called \textit{random permutation}: whenever a function is executed over a portion of its domain, the order of the independent parts is randomized to the greatest extent possible.

\subsection{Motivation}
Our research was provoked by the successful side-channel attacks against one of the NIST’s then-PQC algorithm candidates using only a single trace~\cite{single_pqc,single_KMU}. The well-executed cryptanalytic feat has revealed that the masking technique is ineffective against side-channel attacks. That and the following evidence would give a lift to the prevailing nature of the hiding technique: Fault injection–a well-known kissing cousin of side-channel attacks–can be thwarted using the hiding technique~\cite{SE_PQC}.

On a more mathematical side, we observed that the complexity of \textit{shuffling} is an overriding element that determines the security strength of countermeasures against a certain type of attacks including single trace attacks. Therefore, verifying the complexity of \textit{permutations} that are generated on-the-spot is necessary to ensure the security of the implementation.

Of particular interest is the fact that timing attacks (which make up a subgenre of side-channel attacks) targeting code-based algorithms including NIST’s Round4 HQC have been identified~\cite{timing_HQC}. That study instigated the development of reference codes for key-generation algorithms that incorporate permutations as a countermeasure. This demonstrably suggests that secure implementation of permutations significantly affects the security level of the algorithm’s implementation~\cite{NIST22,secure_BIKE}. 

We now got all the bases covered. The above narrative paves the way for the following: 

Fisher–Yates (F-Y) shuffling is a well-known algorithm for uniform random permutation. It is widely used in the hiding countermeasures~\cite{FY}. If meticulously and dexterously implemented, the same algorithm can be a generator of perfect permutations. Moreover, provided the main operations to be hidden are in store, one is not compelled to reduce the level of complexity, especially in a software environment, since a software code must be operated sequentially. From it one can reasonably infer that F-Y shuffling must be a strongly favored algorithm to implement for a hiding operation. There are at least three major factors that influence the randomness of the implemented permutation–this observation is indeed the \textit{raison d’être} of this paper: (i) Incorrect implementation. (ii) Poor random source. (iii) Misapplication of random numbers (the chosen random number being too small). Suppose that a certain implementation method produces biased results. In that instance, it is pivotal to have a cut-and-dried formula at our disposal that discerns whether the implementation was outright flawed.

\subsection{Related Works and Challenges}

Pre-existing studies with regards to the validation of the uniformity of random permutations are based on the methods as described in~\cite{per_a1,per_a2}.  Those studies gear more towards the theoretical uniformity of the algorithm as opposed to \textit{bona fide} implementation. They do not address the randomness of permutations.

If the number of objects being shuffled is too large (in our example, $N\geq 32$, say), then verifying its uniform distribution becomes extremely challenging due to the fact that there are an exorbitant number of cases to examine ($N!$ to be exact). 

V. Charvillon et.al.~\cite{shuffling} conducted an extensive study on the hiding countermeasure. They proposed high-speed permutation generators and concluded that the hiding countermeasure is not plagued by a small bias precipitated from their permutation generation algorithm in the event of side-channel attacks. In short, small biases do not boost the capability of side-channel attacks, thus they are deemed negligible. They described how they assessed the bias of the permutation generation, namely: they computed the mutual Euclidean distances between the uniform random distributions for the values of N with $3\leq N\leq 9$. Due to the brute force nature of this approach, verification when $N\geq 10$ could not be extrapolated, at least within a manageable sample space size (and a manageable amount of time). 

R. Mitchell et. al.~\cite{Shu_chi} suggested a generator of random permutations using a GPU. They validated the randomness of its permutation generation with a $\chi^2$ test, albeit with a caveat: they only tested it for $N$ with $N\leq 5$. When $N$ is somewhere around 32 (in which case we say $N$ is in \textit{“normal size”}), an astronomical number of tests is required, which is virtually impossible to carry out in a reasonable amount of time.

All the while, there is a consummate body of mathematical research on uniform and non-uniform permutations~\cite{ref_per1,ref_per2,ref_per3,ref_per4,ref_per5,Phd_T}. (Note: some are as old as the 1960s and some are as recent as 2016–.) However, those are too theoretical to be applied as a ``distinguisher'' ({\it i.e.\/}, a criterion that demarcates the boundary between the uniformity and the non-uniformity). Practical tools to easily verify the uniformity of a small sample space are called for.

\subsection{Our Contribution and the novelty of research} 
The gist of our study boils down to figuring out how best to conduct the experiment. As a starter, we furnish one well-defined permutation, whose order is $n-1$, say. With it, we conduct experiments on various F-Y shuffling implementation methods, and discern which one of the methods works best. Then we use the results to determine which implementation method is the most secure.

We prove that, an $n$-th order permutation is an $(n-1)$-th order permutation. Furthermore, we prove that, if the chosen permutation has order $N-1$, where $N$ coincides with the number of nodes in the permutation, then perfect random security is achieved. Note that the two notions, the notion of an $N$-th order permutation, and the widely accepted notion of an $N$-th order side-channel attack, are in mutually redeeming relations.

The contributions of this paper are as follows:

\begin{itemize}
\item We furnish four different kinds of implementations of F-Y shuffling, and pit them against each other: two standard implementations and two poorly structured implementations.

\item  We propose a new, viable verification method for random permutations, namely, the $(N-1)$-th order permutation estimator. We also provide the theoretical background and elucidate the reason why we chose this method. We contend that this is a viable estimation tool for side-channel attack evaluators.

\item  We compare our verification method against the brute-force factorial complexity estimation, and we substantiate our method’s efficacy. We devise a way to expeditiously reduce the number of cases in examining a large pool of permutations through experimentation. 

\item  We present experimental results that corroborate with the conclusion we drew, namely, the following three factors influence (work for or against) the uniformity of the F-Y shuffling operation: (i) Incorrect implementation. (ii) Poor random source. (iii) Misapplication of random numbers. This result should be deemed instrumental for developers of side-channel attack countermeasures. To the best of our knowledge, what this paper expounds is the first study ever to evaluate the real-world adverse impact of incorrect implementations of F-Y shuffling caused/aggravated by the aforementioned three factors (i--iii).

\end{itemize}

\section{Prerequisites}\label{pre}
\subsection{Side channel attack and hiding countermeasure}

Paul Kocher was the first to propose the concept of side-channel analysis~\cite{timming}: it has since kept garnering a wealth of ideas and theories, and today it is in a greatly ameliorated shape. The common-denominator philosophy of cryptographic attacks banks on the premise that sensitive data are calculated (data-crunched) or otherwise unwittingly exposed in a specific time-period. Attacks epitomizing that philosophy resort to statistical analysis of multiple power and electromagnetic (EM) traces~\cite{CPA,template,DPA}. They attempt to ferret out the key by analyzing a single trace~\cite{SPA}. In recent years, single/multiple trace attacks have evolved into many different forms.

Archetypal examples include side-channel collision attacks~\cite{collision}, soft analytical side-channel attacks (SASCA)~\cite{SASCA}, and single trace attacks on PQC ~\cite{single_pqc,single_KMU}. A once-favored countermeasure against a multi trace analysis was the masking technique~\cite{book}. The method generates a masking value in advance, and recalibrates sensitive operations so as to prevent an unwitting exposure of the original data. As masking distorts the original value, a restoration operation is required to recover the masked value. Therefore, implementing the entire countermeasure securely through the masking method is a rather challenging undertaking, since it is a non-linear operation, and performance failure can occur at the configuration level. Moreover, the masking technique is not an effective countermeasure for a single trace analysis, indeed, it has been brought to light that the masking value itself can be discovered in attacks linked to a single trace analysis of the masking~\cite{k_trace}.

And here comes the alternative with a revamped feature—the hiding countermeasure. The hiding countermeasure involves shuffling of the order of operations that are mutually independent so as to prevent a specific operation from being caught sight of at a certain time~\cite{book,shuffling}. For instance, in the case of an AES S-box, which consists of parallel and completely independent operations, 16 S-boxes can be shuffled and calculated for each round. If shuffling is applied to 16 S-boxes independently, then the probability for a specific operation to be executed at a given location is $1/16$. 

Depending on the noise characteristics of the device, the probability of $1/16$ is deemed sufficient as a countermeasure against a general power (or EM) analysis. However, when defending against a more potent attack, such as a “second order attack”, we need the location information for two S-boxes to be shuffled with higher complexity in order to toughen the attack complexity. If the locations of the two S-boxes are mutually independent, then the position of the operation being executed is correctly pinned down with the probability of $1/ (15\times 16)$.

Assuming that all of the 16 keys are found with a single trace attack, the attack complexity being $15\times 16$ is too small. That is why we must resort to the notion of a so-called “perfect permutation”. If a perfect permutation is implemented, then an attacker would need to perform brute-force attacks on 16! cases (where $16!$ amounts to about 45 bits). Although 45 bits would not be characterized as a practically impossible level of attack complexity, once one incorporates another element of fortification, namely, the practical difficulty of a single trace analysis, one may safely conclude that the attack is already devilishly challenging.

\subsection{Fault injection attack and hiding countermeasure}

\textit{A fault injection attack}, a kissing cousin of a side channel attack, is a form of an active attack perpetrated by an attacker who deliberately introduces an error so as to ferret out the key value they are searching~\cite{bellcore}. By nature, one anticipates countless different attack scenarios, and that makes counteracting decidedly more challenging. For example, in a real-world safe-error attack~\cite{SE_PQC,SE_O}, an attacker cunningly identifies dummy operations, artfully alters the key values in memory to $0$ through fault injections, and that way successfully taps into the non-zero values associated with the keys. There are also numerous newer known attack variations that are proved to outwit the existing countermeasures. All in all, fault injection is–at least in theory–extraordinarily potent. 

Hiding operation proves to be thus-far the sole known, proved-to-be-effective, countermeasure against fault injection attacks. Indeed, the error variables unleashed by an attacker are tracked down and embargoed on the spot where the fault was inserted.

\subsection{Post Quantum Cryptography and hiding countermeasure}

Historically, hiding countermeasures were not a popular choice for protection of the public key encryption algorithms such as RSA and ECC, due to operational characteristics such as integer operation carryovers. Hiding countermeasures were found to be serviceable in {\it symmetric\/} key cryptography algorithms such as AES. In addition, the hiding method was also employed as an auxiliary measure for {\it asymmetric\/} key algorithms: indeed, the hiding method proved to have the redeemable quality of bolstering masking countermeasures and making up for the shortcomings of masking-only strategies~\cite{ACNS2007}.

Back to the NIST’s standardization process for Post-Quantum Cryptography (PQC) (mentioned in I. Introduction): after the third-round selection, the first cohort of competition winners was announced~\cite{NIST20}. Mathematically speaking, those NIST-vouchsafed algorithms are embodiments of lattice-, code-, and hash-based problems. Those cryptographically pivotal mathematical problems involve operations on polynomials, vectors, and matrices. Solving them requires the so-called “parallel programming processing”. They involve a sufficiently large number of “nodes”. Now, even with that sophisticated level of deterrence, they can still be vulnerable if equipped only with masking countermeasures~\cite{BP_PQC}. 

It is for this reason that hiding countermeasures are in demand as effective countermeasures for PQC algorithms. When the number of independent operations exceeds a certain threshold, convincing results as to the effectiveness of the hiding counternmeasures are naturally expected. That fact affords us to do away with the masking technique. While the effectiveness also depends on the noise level and the extent of power leakage, it is to all appearances viable to implement hiding countermeasures without incorporating masking countermeasures. In a nutshell, in the era of PQC, one’s ability to select a perfect permutation—the pivotal mathematical notion that underpins the hiding techniques—is likely a game-changer for the security durability of an algorithm. The fact that the hiding countermeasures are no longer relegated to a second place behind masking techniques strongly supports that narrative.

\subsection{Basic mathematical requisite for hiding countermeasure : Permutation}

In mathematics, the term ``permutation'' refers to a one-to-one map from a set onto itself, or the same to say, a bijective self-map. When the term ``permutation'' is used, often (though not always) it is understood that the referenced set is assumed to be finite. For each positive integer $N$, $\mathbb{Z}_N$ denotes the set of integers $j$ such that $0 \leq j \leq N-1$: $\mathbb{Z}_N = \{0, 1, \cdot \cdot \cdot , N-1\}$. Any (non-empty) finite set is in one-to-one correspondence with $\mathbb{Z}_N$ for some $N$.

{\begin{definition}\emph{$($\textup{permutations}$)$} \textup{A permutation is a \textit{bijective} self-map $\mathbb{Z}_N \longrightarrow \mathbb{Z}_N$.} 
\end{definition}

Technically, this definition itself does neither involve integer arithmetic nor orders (``$\leq$'', \textit{etc.}). Thus we may instead furnish any (abstract) finite set $X$, and call a bijective self-map $X \longrightarrow X$ a permutation. However, we prefer $\Bbb Z_N$ over an abstract set $X$ due to the nature of the ensuing discussion. Let $S$ be the set of all permutations $\Bbb Z_N \longrightarrow \Bbb Z_N$: $S = \{f_1, \cdot \cdot \cdot , f_M\}$, where $M$ is the cardinality of the set $S$: $M = |S|$. It is well-known that $M$ equals the factorial of $N$: $M = N!$.

{\begin{definition}\emph{$($\textup{select functions}$)$} \textup{Let $S = \{f_1, \cdot \cdot \cdot , f_M\}$ $(M = N!)$ be as above. Let $T$ be a positive integer. A \textit{select function} is a map $\gamma : \Bbb Z_T \longrightarrow S$.} 
\end{definition}

Note that, in Definition 2, we do neither assume that $\gamma$ is injective nor surjective. Also note that $T$ is an arbitrary positive integer. In particular, we do not preclude the case $T < M (= N!)$ from consideration.

It is convenient to view an element of $S$ as a \textit{shuffling} of a deck of $N$ total number of cards (labeled $1$, $2$, $\cdot \! \cdot \! \cdot$, $N $). That in turn prompts us to view $\gamma : \Bbb Z_T \longrightarrow S$ as a finite sequence of shuffling. Namely, $\gamma$ takes each $t \in \Bbb Z_T = \{0, 1, \cdot \cdot \cdot , T-1\}$ into some shuffling of a deck: $t = 0$ designates one shuffling of that deck; $t = 1$ designates another shuffling of the same deck (it can be the same shuffling as the $t = 0$ case); $t = 2$ designates yet another shuffling of the same deck (it can be the same shuffling as the $t = 0$ or the $t = 1$ case), and so forth. Viewing it from another angle: when one shuffles that deck $T$ consecutive number of times, that yields $\gamma : \Bbb Z_T \longrightarrow S$. With that mathematical scope in mind, let us agree that AES is simply a select function whose argument is a key.

Towards our goal of making a cryptographic algorithm an unadulterated secure commodity using only the hiding technique, it is crucial to engineer a permutation generator that generates permutations that bear the satisfactory level of complexity (``\textit{safe permutation}''). To that end, we theoretically offer a criterion for a safe permutation, and experimentally test that theory. We must have a way of knowing whether each permutation designated by a select function $\gamma : \Bbb Z_T \longrightarrow S$ forms a \textit{uniform distribution} (= \textit{perfect random security} in Definition 3). The hurdle to overcome resides in the fact that, when the number $N$ grows super-exponentially (\textit{e}.\textit{g}., at the growth rate of the factorial sequence $\{0!, 1!, 2!, \cdot \cdot \cdot , n!, \cdot \cdot \cdot\}$), determining whether a given permutation $f_j : \Bbb Z_N \longrightarrow \Bbb Z_N$ is a uniform distribution through testing a large case sample space is practically impossible. In this paper, we introduce the notion of a so-called ``security estimator'', and furthermore propose a viable way to verify the safeness (``\textit{completeness}'') of the given permutations. Note that, when the security estimator has order $N-1$, it appears to be complex enough to defy any meaningful testing, however, in reality, that level of complexity is negligible as compared to the factorial complexity.

In this paper, we re-calibrate the Fisher-Yates shuffling~\cite{Knuth,FY} as a select function. We consummately perform all necessary calculations. We establish the fact that, theoretically, Fisher-Yates shuffling can produce an arbitrary uniform permutation on a given finite set.

\section{New evaluation framework for permutations}

In this section we introduce a new evaluation criterion for permutations. The result is summarized in Theorem~\ref{coro}: if a select function $\gamma : \Bbb Z_T \longrightarrow S$, where $|S| = N!$, has order $N \! - \! 1$, then $\gamma$ possesses perfect random security. We substantiate our claim that our verification method (sufficient condition) when (under what circumstance) a given select function has order $N \! - \! 1$ is applicable to decide whether a given select function is a “completely uniform” permutation (= \textit{perfect random security}). To that end, we introduce several concepts.

\subsection{Perfect random security and k-th order permutation}

First, we introduce the concept ``{\it perfect random security\/}''  (Definition~\ref{perfect} below). This definition encapsulates the idea that, some particularly ``well-trimmed'' select function $\gamma$  \textit{selects all permutations with equal probability}. In theory we can refer to Definition~\ref{perfect} in deciding whether the select function achieves the perfect random security. If it indeed does, then it means that the implementation is of the most ideal form.

\begin{definition}\label{perfect}
\emph{$($\textup{perfect random security}$)$} 
\textup{A select function $\gamma : \Bbb Z_T \rightarrow S$ (where $S$ is the entire set of permutations of $N$ indices, in particular, $|S| = N!$) is said to have {\it the perfect random security\/}, if for an arbitrary $f \in S$, the probability $P(\gamma(m) = f) = 1/N!$, where $m \! \in \! \mathbb{Z}_T$ is randomly chosen. Or, equivalently: A select function $\gamma$ is said to have {\it the perfect random security\/}, if $\gamma$ is surjective and moreover, $\# \gamma^{-1} (f)$ (the number of elements in the fiber $\gamma^{-1} (f)$) does not depend on the choice of $f \in S$.}
\end{definition}

The caveat of Definition 3 is that it is not easy to decide whether a given select function $\gamma$ has the perfect random security, since the task boils down to an {\it exact\/} counting of the number of elements of a set. {\it Exact\/} counting is known to be very hard when the set is large. In our case, a large sample space is involved. Accordingly, below we propose an alternative concept with a redeeming feature, called a $k$-th order select function:

\begin{definition}\label{main_definition}
\emph{($k$$\textup{-th order select function}$ and $k$-th order permutations)} 

\textup{Let $k$ and $N$ be integers, $1 \leq k \leq N \! - \! 1$. A select function $\gamma : \Bbb Z_T \longrightarrow S$ (where $S$ is the entire set of permutations of $N$ indices, in particular, $|S| = N!$) is said to be of \textit{order} $k$ if, for an arbitrary pair of ordered $k$-tuples $(a_1, \cdot \! \cdot \! \cdot, a_k)$, $(b_1, \cdot \! \cdot \! \cdot, b_k)$, where 
$\{a_1, \cdot \! \cdot \! \cdot, a_k\}$ and $\{b_1, \cdot \! \cdot \! \cdot, b_k\}$ are subsets of $\{1, \cdots, N\}$ such that $\# \{a_1, \cdots, a_k\} = k = \# \{b_1, \cdots, b_k\}$,}

\vskip -5mm
//

$$
P\Big((f_m (a_k) = b_k) \Big| \bigcap_{i=1}^{k-1}(f_m(a_i) = b_i)\Big)=\frac{1}{\, N \! - \! k \! + \! 1 \,}
$$

\vskip -2mm

\noindent
\textup{holds, where $f_m = \gamma (m)$, and $m \in \Bbb Z_T$ is randomly chosen. Moreover, $\operatorname{Im} \gamma = \{f_m \, | \, m \in \Bbb Z_T\}$, where $\gamma$ is of order $k$, is referred to as a $k$\textit{-th order set of permutations}, or simply, $k$\textit{-th order permutations}.} 
\end{definition}

{Definition~\ref{main_definition} encapsulates a mathematical idea that best illustrates the so-called ``\textit{hiding countermeasure} \textit{against} \textit{higher-order attacks}'', a ``tiered'' version of the well-known side channel attacks. 

Below we make some purely mathematical observation pertinent to Definition 4, which also serves to \textit{demystify} the convolutedness of the definition stemming from the fact that it uses \textit{conditional} probability. To those readers who have the proclivity towards pure math, the following is just what the doctor ordered.

First, 1st order permutations are characterized by the equation
\vskip -2.5mm

$$
P\big((f_m(a_1) = b_1) \big) =\frac{1}{\, N \,}.
$$

\noindent
We may paraphrase it without resorting to the notion of probability is as follows: A select function $\gamma : \Bbb Z_T \longrightarrow S_N$ (where $S_N$ denotes the entire set of permutations of $N$ indices) is of order $1$ precisely when each of the sets $\{f_m (1) | m \in \Bbb Z_T\}$, $\{f_m (2) | m \in \Bbb Z_T\}$, $\cdots$, $\{f_m (N) | m \in \Bbb Z_T\}$ equals the entire $\{1, \cdots, N\}$, and moreover, $\# \gamma^{-1} (f)$ (the number of elements in the fiber $\gamma^{-1} (f)$) does not depend on the choice of $f \in \operatorname{Im} \gamma$.

Next, 2nd order permutations are characterized by the equation
$$
P\big((f_m(a_2) = b_2) | (f_m(a_1)= b_1)\big) =\frac{1}{\, N \! - \! 1 \,}.
$$

Paraphrasing of it without resorting to the notion of probability is as follows: We first define an auxiliary object $\rho_2$ (a map). For a permutation $f$ of $N$ indices $\{1, \cdots, N\}$ and for an ordered pair of indices $(a, b)$ where $a, b \in \{1, \cdots$, $N\}$, $a \not= b$, we have a new pair $(f(a), f(b))$, where $f(a)$ and $f(b)$ satisfy $f(a), f(b) \in \{1, \cdots, N\}$, and $f (a) \not= f(b)$. This way we obtain a permutation 
(a bijection) $f^{(2)}$ over the set $\{1, \cdot \cdot \cdot, N\} \times \{1, \cdot \cdot \cdot, N\} \backslash \Delta$, where $\Delta = \{(a, a) | 1 \leq a \leq N\}$. Lexicographical ordering of elements of $\{1, \cdot \cdot \cdot, N\} \times \{1, \cdot \cdot \cdot, N\} \backslash \Delta$ allows one to view $f^{(2)}$ as a permutation of $N (N \! - \! 1)$ indices. This way we arrive at a map $\rho_2 : S_N \longrightarrow S_{N (N \! - \! 1)}$; $\rho_2 (f) = f^{(2)}$.

\begin{fact}\label{product representation - 2}
Let $\gamma : \Bbb Z_T \longrightarrow S_N$ be a select function, where $S_N$ is the entire set of permutations with $N$ indices. Let $\rho : S_N \longrightarrow S_{N (N \! - \! 1)}$ be as above. Then $\gamma$ has order $2$ if and only if the composite $\rho_2 \circ \gamma : \Bbb Z_T \longrightarrow S_{N (N \! - \! 1)}$ has order $1$.
\end{fact}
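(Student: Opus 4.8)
The plan is to reduce both sides to explicit equalities among the marginal probabilities $p_a(b) := P(f_m(a) = b)$ and their joints, and then to check that the two resulting systems of equations coincide. First the translation. Writing $g_m := (\rho_2 \circ \gamma)(m) = f_m^{(2)}$ and recalling that under the lexicographic identification every index of $S_{N(N-1)}$ is an ordered pair $(a,b)$ with $a \neq b$, and that $g_m(a,b) = (f_m(a), f_m(b))$, the event $\{g_m(a,b) = (c,d)\}$ is literally $\{f_m(a) = c\} \cap \{f_m(b) = d\}$. Consequently ``$\rho_2 \circ \gamma$ has order $1$'' unwinds to the joint condition $P(f_m(a) = c \wedge f_m(b) = d) = \frac{1}{N(N-1)}$ for all $a \neq b$ and $c \neq d$, while ``$\gamma$ has order $2$'' is the conditional condition $P(f_m(a_2) = b_2 \mid f_m(a_1) = b_1) = \frac{1}{N-1}$. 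Since order $2$ presupposes $2 \le N-1$, I may assume $N \ge 3$; moreover the definition tacitly requires every conditioning event $\{f_m(a_1) = b_1\}$ to have positive probability, so passing between conditional and joint probabilities is legitimate.

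For the direction ``lift order $1 \Rightarrow \gamma$ order $2$'' I would first recover the single marginals. Fixing a cell $(a,c)$ and any $b \neq a$, the events $\{f_m(a) = c \wedge f_m(b) = d\}$ over $d \neq c$ partition $\{f_m(a) = c\}$, since the value $d = c$ is impossible by injectivity of $f_m$; summing the joint condition over the $N-1$ admissible values of $d$ therefore gives $p_a(c) = (N-1) \cdot \frac{1}{N(N-1)} = \frac{1}{N}$. Dividing the joint probability by this marginal then yields $P(f_m(a_2) = b_2 \mid f_m(a_1) = b_1) = \frac{1/(N(N-1))}{1/N} = \frac{1}{N-1}$, which is exactly order $2$.

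The converse is where the real work lies, and it is the step I expect to be the main obstacle: order $2$ hands me only conditional probabilities, so before I can reconstruct the joints $\frac{1}{N(N-1)}$ I must show that order $2$ already forces order $1$, i.e. $p_a(b) \equiv 1/N$. My plan here is a symmetry argument. Converting the order-$2$ identity to joint form gives $P(f_m(a_1) = b_1 \wedge f_m(a_2) = b_2) = \frac{p_{a_1}(b_1)}{N-1}$; but the same joint event, conditioned the other way around, equals $\frac{p_{a_2}(b_2)}{N-1}$, whence $p_{a_1}(b_1) = p_{a_2}(b_2)$ for all $a_1 \neq a_2$ and $b_1 \neq b_2$. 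A short connectedness step, available because $N \ge 3$, then equalizes even cells that share a row or a column, by routing through a third cell that avoids both coordinates; hence all $p_a(b)$ are equal, and normalization $\sum_b p_a(b) = 1$ forces $p_a(b) = 1/N$. With the marginals in hand, the order-$2$ relation immediately returns the joint probability $\frac{1}{N-1} \cdot \frac{1}{N} = \frac{1}{N(N-1)}$, which is precisely order $1$ of $\rho_2 \circ \gamma$. The only genuinely non-formal idea in the whole argument is this ``order $2 \Rightarrow$ order $1$'' passage; everything else is a direct unwinding of the definitions.
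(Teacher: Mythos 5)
Your proof is correct, and it is worth noting that the paper itself offers no proof of this Fact at all---it is simply asserted---so there is nothing in the source to compare against; your argument supplies the missing justification. Both directions check out: the translation of ``$\rho_2\circ\gamma$ has order $1$'' into the joint condition $P\bigl(f_m(a)=c \wedge f_m(b)=d\bigr)=\tfrac{1}{N(N-1)}$ is the right unwinding of the lexicographic identification, the partition-by-$d$ step correctly recovers the uniform marginals $p_a(c)=1/N$ in the forward direction, and you rightly identify the converse as the substantive step. Your symmetry argument there---equating $p_{a_1}(b_1)=p_{a_2}(b_2)$ by conditioning the same joint event both ways, then propagating equality to cells sharing a row or column through a third cell (which needs $N\ge 3$, guaranteed since order $2$ requires $2\le N-1$)---is sound, and normalization then pins every marginal at $1/N$. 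Your reading that the conditioning events must have positive probability is also essentially forced by the paper's own Definition~\ref{main_definition}, since a conditional probability with a null condition could not equal $\tfrac{1}{N-1}$ in any reasonable interpretation; the authors' later ``Approximate'' definition, which adds $P(f_m(x)=y)\neq 0$ explicitly, corroborates this. The one thing you might add for completeness is a remark that the paper's informal ``paraphrase'' of order $1$ in terms of image sets and fiber cardinalities is not what you are using---you work directly from the probabilistic definition, which is the correct choice.
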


Likewise, the definition of $k$-th order permutations can be paraphrased without resorting to the notion of probability, through $f^{(k)}$; $f^{(k)} (a_1, \cdots, a_k) = (f (a_1), \cdots, f(a_k))$ (where $a_1$, $\cdots$, $a_k$ are distinct elements of $\{1, \cdots , N\}$), and $\rho_k : S_N \longrightarrow S_{N (N \! - \! 1) \cdot \, \cdots \, \cdot (N \! - \! k \! + \! 1)}$; $\rho_k (f) = f^{(k)}$.

\begin{fact}\label{product representation - k}
Let $\gamma : \Bbb Z_T \longrightarrow S_N$ be a select function, where $S_N$ is the entire set of permutations with $N$ indices. Let $\rho : S_N \longrightarrow S_{N (N \! - \! 1)}$ be as above. Then $\gamma$ has order $k$ if and only if the composite $\rho_k \circ \gamma : \Bbb Z_T \longrightarrow S_{N (N \! - \! 1) \cdot \, \cdots \, \cdot (N \! - \! k \! + \! 1)}$ has order $1$.
\end{fact}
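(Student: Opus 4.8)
The plan is to translate the order-$1$ condition for the lifted select function $\rho_k \circ \gamma$ into a statement about the \emph{joint} distribution of the values $f_m(a_1), \dots, f_m(a_k)$, and then relate that joint distribution to the \emph{conditional} probability appearing in the definition of order $k$. Throughout I fix ordered $k$-tuples $A = (a_1, \dots, a_k)$ and $B = (b_1, \dots, b_k)$ of distinct indices and abbreviate the joint probabilities $q_j = P\big(\bigcap_{i=1}^{j}(f_m(a_i) = b_i)\big)$ for $0 \le j \le k$, with $q_0 = 1$. The indices of the large symmetric group $S_{N(N-1)\cdots(N-k+1)}$ are exactly the ordered $k$-tuples of distinct elements of $\{1,\dots,N\}$, and the event $g_m(A) = B$ for $g_m = (\rho_k \circ \gamma)(m) = f_m^{(k)}$ unwinds precisely to $\bigcap_{i=1}^{k}(f_m(a_i) = b_i)$. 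Hence, writing $N_k = N(N-1)\cdots(N-k+1)$, the statement ``$\rho_k \circ \gamma$ has order $1$'' is, by Definition~\ref{main_definition} applied with order parameter $1$ to $g_m$, exactly the assertion that $q_k = 1/N_k$ for every pair $A, B$.

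For the reverse implication I would argue directly. Assuming $q_k \equiv 1/N_k$, I recover $q_{k-1}$ by marginalizing over the last coordinate: conditioned on the first $k-1$ equalities, $f_m(a_k)$ must take one of the $N - k + 1$ values outside $\{b_1, \dots, b_{k-1}\}$ (since $f_m$ is a bijection and $a_k$ is distinct from $a_1, \dots, a_{k-1}$), and these cases are mutually exclusive and exhaustive. Summing $q_k$ over these choices of $b_k$ gives $q_{k-1} = (N-k+1)/N_k = 1/N_{k-1}$, whence $q_k/q_{k-1} = N_{k-1}/N_k = 1/(N-k+1)$, which is precisely the defining equation of order $k$.

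For the forward implication I would telescope. The subtle point is that the definition of order $k$ records only the single conditional at ``level $k$'', whereas reconstructing the joint probability $q_k$ requires control of every intermediate level. Here I invoke the downward-closure property the paper establishes---that an order-$k$ select function is also of order $j$ for every $1 \le j \le k$---which supplies $q_j / q_{j-1} = 1/(N-j+1)$ for each $j$, applied to the length-$j$ prefixes of $A$ and $B$. A short induction also shows every conditioning event has positive probability ($q_1 = 1/N > 0$, and then $q_j = q_{j-1}/(N-j+1) > 0$ in turn), so no conditional is undefined. Telescoping $q_k = \prod_{j=1}^{k} (q_j/q_{j-1}) = \prod_{j=1}^{k} 1/(N-j+1) = 1/N_k$ then holds for every $A, B$, which is exactly order $1$ for $\rho_k \circ \gamma$.

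The main obstacle is precisely this forward direction: the literal definition of ``order $k$'' is a one-level conditional, and a quick marginalization check shows the level-$k$ relations alone do \emph{not} pin down the marginal $q_{k-1}$, so the equivalence is valid only once the lower-order conditions are known to hold. I therefore expect the crux to be a careful appeal to downward closure (together with the attendant positivity bookkeeping), after which the telescoping is routine. This also explains why Fact~\ref{product representation - 2} (the case $k=2$) and the present Fact~\ref{product representation - k} share the same underlying argument, the general case differing only in the length of the telescoping product.
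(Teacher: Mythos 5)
Your proof is correct, and it is worth noting that the paper itself offers no proof of this Fact at all: it is asserted as an immediate, probability-free paraphrase of Definition~\ref{main_definition}, so your argument supplies reasoning the paper leaves implicit. In doing so it exposes a real asymmetry between the two directions. The reverse implication is genuinely elementary, exactly as you present it: marginalizing the uniform joint probability $1/N_k$ over the $N-k+1$ admissible values of $b_k$ yields the $(k-1)$-fold marginal $1/N_{k-1}$, and the quotient is the defining conditional $1/(N-k+1)$. The forward implication is \emph{not} a mere unwinding of definitions: the order-$k$ condition controls only the top-level ratio $q_k/q_{k-1}$, and telescoping down to $q_k=1/N_k$ requires $q_j/q_{j-1}=1/(N-j+1)$ at every level $j\le k$, i.e., the downward-closure property of Remark~\ref{main_theorem}. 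Your appeal to that Remark, together with the positivity bookkeeping that keeps each conditional well-defined, is the right move, and there is no circularity since the appendix proof of Remark~\ref{main_theorem} does not rely on this Fact; but note that in the paper's ordering the Fact precedes the Remark, so a self-contained write-up should either reorder the two or flag the forward dependence explicitly. One small caveat on your closing commentary: the level-$k$ relations, imposed over \emph{all} ordered $k$-tuples (in particular over reorderings of a fixed tuple), do in fact determine the lower-order marginals---that is precisely what Remark~\ref{main_theorem} proves---so the lower-order conditions are consequences of the hypothesis rather than extra assumptions; your proof handles this correctly, and only the surrounding remark slightly overstates the gap.
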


Now back to order $2$. In a 32-index permutation ({\it i.e.\/}, $N=32$), say, $P(f(4)=25|f(13)=7)=1/31$. Likewise, if we replace $4$, $25$, $13$ and $7$ with $a$, $b$, $c$ and $d$, where $a \not= c$ and $b \not= d$, then we still have $P (f (a) = b | f (c) = d) = 1/31$.} If a selection function $\gamma$ maintains perfect random security, then $\gamma$ trivially yields second-order permutations. To construct countermeasures against $k$-th order side channel attacks, we can simply use $k$-th order permutations (see Proposition~\ref{CPA_countermeausre}).

A set of second-order permutations is a set of first-order permutations. 
More generally, a set of $k$-th order permutations is a set of $(k-1)$-th permutations. We will prove this Theorem in the next section. 

To better understand Definition~\ref{main_definition}, consider a relatively simple example using permutations of four indices. Let \( S_4 \) denote the set of all permutations of \( \{1,2,3,4\} \). Assume that the select function \( \gamma \) maps onto the following subset of \( S_4 \):
\[
\{[1,2,3,4], [2,3,4,1], [3,4,1,2], [4,1,2,3]\}.
\]
Furthermore, assume that \( \gamma \) maps onto that set ``uniformly'', meaning that the number of elements in the fiber $\gamma^{-1} ([i_1,i_2,i_3,i_4])$ does not depend on the choice of $[i_1,i_2,i_3,i_4]$ in the set
\[
\{[1,2,3,4], [2,3,4,1], [3,4,1,2], [4,1,2,3]\}.
\]
Then \( \gamma \) is first order. However, according to the definition, \( \gamma \) is not second order. Similarly, even when a given select function $\gamma$ yields second order permutations, it may not yield third order permutations. This will be demonstrated in Appendix~\ref{App:Example_2nd} using an eye-catching example, a $40$-tuple of permutations with five indices possessing some curious symmetry. 

\begin{fact}\label{CPA_countermeausre}
As a countermeasure against the $k$-th order power analysis attack with $k$ different points of interest which is shuffled, $k$-th order permutation is enough to hide the operations.   
\end{fact}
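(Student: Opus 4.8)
The plan is to translate the security claim into the exact probabilistic condition supplied by Definition~\ref{main_definition}, and then verify that this condition forces the adversary's view to be uniform. First I would fix the threat model: in a $k$-th order power analysis attack the adversary picks $k$ of the shuffled operations, labelled by distinct source indices $a_1, \ldots, a_k$, and monitors $k$ points of interest, i.e.\ $k$ distinct candidate slots $b_1, \ldots, b_k$. The attack succeeds only if the adversary can pin down the joint assignment $f_m(a_1) = b_1, \ldots, f_m(a_k) = b_k$ with probability strictly above the guessing bound. The shuffle \emph{hides} the operations precisely when the distribution of this joint assignment is uniform, so that knowing where some of the monitored operations landed leaks nothing about the others.

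The key computation is the chain rule for the joint probability. I would write
\[
P\!\left(\bigcap_{i=1}^{k}(f_m(a_i) = b_i)\right) = \prod_{j=1}^{k} P\!\left((f_m(a_j)=b_j) \,\Big|\, \bigcap_{i=1}^{j-1}(f_m(a_i)=b_i)\right),
\]
and then observe that the $j$-th factor on the right is exactly the conditional probability defining \emph{order} $j$ in Definition~\ref{main_definition}. Here the argument draws on the downward-closure property announced just above the statement (a set of $k$-th order permutations is a set of $(k-1)$-th order permutations, to be proved in the next section): it guarantees that $\gamma$ is of order $j$ for every $1 \le j \le k$, so the $j$-th factor equals $1/(N-j+1)$. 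Multiplying the factors gives
\[
P\!\left(\bigcap_{i=1}^{k}(f_m(a_i)=b_i)\right) = \prod_{j=1}^{k}\frac{1}{\,N-j+1\,} = \frac{(N-k)!}{N!},
\]
which is independent of the chosen tuples $(a_1,\ldots,a_k)$ and $(b_1,\ldots,b_k)$ and is exactly the uniform probability over all $N(N-1)\cdots(N-k+1)$ ordered placements of the $k$ operations.

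I would then close the security interpretation. Since the joint placement of the $k$ monitored operations is uniform, and since each conditional factor $1/(N-j+1)$ shows that fixing the locations of any $j-1$ of them leaves the $j$-th uniform over the remaining $N-j+1$ slots, the adversary gains no advantage from correlating the $k$ points of interest: the best achievable success probability is the guessing bound $(N-k)!/N!$. Hence a $k$-th order permutation is sufficient to hide $k$ shuffled operations against a $k$-th order attack, as claimed.

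I expect the main obstacle to be the modeling step rather than the algebra: one must argue that ``resisting a $k$-th order attack with $k$ points of interest'' is faithfully captured by uniformity of the joint assignment of exactly those $k$ indices, with no auxiliary side information beyond the shuffled positions available to the adversary. A secondary dependency is that the clean product $(N-k)!/N!$ requires \emph{all} of the lower-order conditionals to be uniform, not merely the top one handed to us by the order-$k$ hypothesis; the proof therefore genuinely rests on the downward-closure theorem. I would either invoke that theorem explicitly, or, if a self-contained argument is preferred, re-derive the needed lower-order uniformity directly from the order-$k$ condition by summing the order-$k$ identity over the unconstrained coordinates.
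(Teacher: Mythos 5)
Your proposal is correct and follows essentially the same route as the paper, which justifies this fact only informally by observing that the attacker's success probability against $k$ shuffled targets is the product $\frac{1}{N}\cdot\frac{1}{N-1}\cdots$, i.e., the guessing bound, and that this "corresponds to" Definition~\ref{main_definition}. Your write-up is more careful on the one point the paper glosses over: you make explicit that the chain-rule decomposition requires \emph{all} lower-order conditionals to be uniform, not just the order-$k$ one, and hence genuinely depends on the downward-closure result (Remark~\ref{main_theorem}), whereas the paper merely asserts that the $k=2$ computation "can also be similarly extended."
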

Fact~\ref{CPA_countermeausre} demonstrates a bare minimum requirement for permutations to work as a hiding countermeasure for Shuffling, as dependent on the order of the side channel attacks. This fact is plausible (intuitively acceptable), and the reason is as follows. If an attacker targets two distinct shuffled positions out of \(N\) indices, the minimum probability that one attack point matches the two desired targets is given by 
\[
\frac{1}{N} \times \frac{1}{N-1}. 
\]
This corresponds to the 2nd order permutation in Definition~\ref{main_definition}. In the same manner, the $n$-th order permutation can also be similarly extended.

\subsection{Properties of a $(k-1)$th order select function}\label{permutation_property}

In this section, we examine the relationship between the notion of $(k-1)$th order and the notion of $k$-th order. We also establish the fact that the notion of $(N-1)$-th order and the notion of perfect random security are equivalent.

\begin{theorem}\label{coro}
If a select function $\gamma : \Bbb Z_T \longrightarrow S$, where $|S| = N!$, has order $(N \! - \! 1)$, then $\gamma$ has the perfect random security. 
\end{theorem}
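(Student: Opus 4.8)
The plan is to establish perfect random security directly in its first (probabilistic) formulation from Definition~\ref{perfect}: for an arbitrary fixed permutation $f \in S$ I would compute $P(\gamma(m) = f)$ and show it equals exactly $1/N!$. Once this is done, the second (set-theoretic) formulation follows at once, since $\# \gamma^{-1}(f) = T \cdot P(\gamma(m)=f) = T/N! > 0$ is a positive number independent of $f$, forcing $\gamma$ to be surjective with all fibers of equal size.

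Before the computation I would record the single external ingredient, namely the nesting property stated earlier (that a set of $k$-th order permutations is also a set of $(k-1)$-th order permutations). Applying it repeatedly, the hypothesis that $\gamma$ has order $N-1$ guarantees that $\gamma$ has order $k$ for \emph{every} $k$ with $1 \leq k \leq N-1$. This is exactly what makes all the intermediate conditional probabilities appearing below simultaneously well-defined and equal to their prescribed values.

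Next I would fix $f \in S$ and choose the ordering $a_i = i$, $b_i = f(i)$ for $i = 1, \ldots, N$. The event $\{f_m = f\}$ is $\bigcap_{i=1}^{N} \{f_m(i) = f(i)\}$, but since $f_m$ is a bijection, once the first $N-1$ values are pinned to $f(1), \ldots, f(N-1)$ the remaining value is forced to equal $f(N)$; hence
$$
\{f_m = f\} = \bigcap_{i=1}^{N-1} \{f_m(i) = f(i)\}.
$$
This truncation at $N-1$ is essential, because order is only defined for $k \leq N-1$ and I cannot invoke a nonexistent ``order $N$'' condition. I would then verify by induction on $j$ that each partial intersection $\bigcap_{i=1}^{j}\{f_m(i)=f(i)\}$ has strictly positive probability: the base case is the order-$1$ identity $P(f_m(1)=f(1)) = 1/N > 0$, and the step from $j$ to $j+1$ multiplies by the order-$(j+1)$ conditional probability $1/(N-j) > 0$. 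This legitimizes the use of the multiplication rule.

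Finally, applying the chain rule and substituting the order-$k$ values gives
$$
P(\gamma(m)=f) = \prod_{k=1}^{N-1} P\!\left(f_m(a_k)=b_k \,\Big|\, \bigcap_{i=1}^{k-1}(f_m(a_i)=b_i)\right) = \prod_{k=1}^{N-1}\frac{1}{\,N-k+1\,} = \frac{1}{N!},
$$
since the factors run through $\tfrac1N, \tfrac{1}{N-1}, \ldots, \tfrac12$. As $f$ was arbitrary, $\gamma$ has perfect random security. I expect the main obstacle to be the careful bookkeeping that keeps every conditioning event of positive probability (so that the conditional probabilities are genuinely defined) together with the clean invocation of the nesting property; the bijectivity observation that terminates the product at $k=N-1$ rather than $k=N$ is the one genuinely nonroutine idea, since without it the final factor would appeal to an undefined order-$N$ condition.
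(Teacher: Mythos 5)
Your proposal is correct, and it takes a genuinely different --- and far more explicit --- route than the paper's own proof. The paper disposes of Theorem~\ref{coro} in a single sentence, appealing to the fact that every permutation is a product of transpositions; that remark does not by itself yield the conclusion, and the intended mechanism is only hinted at in the surrounding discussion (the observation that fixing $N-2$ input--output pairs leaves a conditional probability of $1/2$). What you supply is the argument that actually closes the gap: you invoke the nesting property (Remark~\ref{main_theorem}) to descend from order $N-1$ to every order $k \leq N-1$, apply the chain rule to write $P(\gamma(m)=f)$ as the telescoping product $\frac{1}{N}\cdot\frac{1}{N-1}\cdots\frac{1}{2}=\frac{1}{N!}$, and justify truncating at $k=N-1$ by the bijectivity of $f_m$ --- precisely the point where a naive reading would demand a nonexistent order-$N$ condition. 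Your additional bookkeeping (checking that each conditioning event has positive probability so the conditional probabilities are defined, and deducing the fiber-counting formulation from $\#\gamma^{-1}(f) = T\cdot P(\gamma(m)=f)$) is absent from the paper and strengthens the result's foundation. The one dependency worth flagging is that your argument leans on Remark~\ref{main_theorem}, which the paper proves separately in the appendix; within the paper's own logical structure that is a legitimate citation, so your proof stands as a complete and preferable replacement for the one-line proof given.
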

\begin{proof}
The proof boils down to the well-known mathematical fact that any permutation of the set $\Bbb Z_T = \{0, \cdots , N-1\}$ is a product of transpositions.  
\end{proof}

Theorem~\ref{coro} is very useful. Indeed, in light of Definition~\ref{main_definition}, in order to verify that a select function $\gamma : \Bbb Z_T \longrightarrow S$, where $|S| = 32!$, has a perfect random security, it suffices to check that the probability of $f_m (a) = b$ is $1/2$, when $(x_{1}, y_{1})$, $\cdots$, $(x_{30}, y_{30})$ is an arbitrarily chosen, fixed, sequence} of 30 (out of 32) mutually distinct inputs $x_j$ and 30 outputs $y_j = f_m (x_j)$ $(j = 1, \cdots, 30)$, and when each of $a$ and $b$ is chosen randomly from the remainder: $a \in \Bbb Z_{32} \backslash \{x_j | j = 1, \cdots, 30\}$; $b \in \Bbb Z_{32} \backslash \{y_j | j=1, \cdots, 30\}$. 

Now let us move on to the next phase: we validate the experimental results statistically, with the expectation of a $1/2$ outcome.

\begin{remark}\label{main_theorem}  
If a select function $\gamma$ is of the $k$-th order for all $k\leq N$, then $\gamma$ is of the $(k-1)$th order. 
\end{remark}
\begin{proof}
The proof is in Appendix~\ref{proof_theorem}. 
\end{proof}

From Remark~\ref{main_theorem}, it follows that, if $\gamma$ supports a $k$-th order permutation, then it also supports a $(k-r)$-th order permutation, for each $r \geq 1$. Therefore, verifying that $\gamma$ suppports a 2nd order permutation is sufficient to protect against both 1st and 2nd order attacks.

\subsection{The determination of security }\label{SD}

By the definition of $(N-1)$th order permutation, the validation of the property is to check probability $P(f(x)=y|\delta)=1/2$ with a condition $\delta$. Because there are only two cases $f(x)=y_0$ and $f(x)=y_1$ with the condition $\delta$, we count all cases satisfying $f(x)=y_0$ and $f(x)=y_1$. If the permutation is uniform, we expect $P(f(x)=y_0|\delta)=1/2$. Considering this situation, it is Bernoulli trials where ${ X }_{ 1 }{ ,X }_{ 2 },...,{ X }_{ n }$ are independent. Their sum is distributed according to a binomial distribution($\mathcal{B}$) with $n$ and $\frac{1}{2}$

\[\Psi=\sum^n_{k=1}{X_k}\sim \mathcal{B}(n,\ \frac{1}{2})\] 

If $n$ is sufficiently large, $\mathcal{B}(n,\ \frac{1}{2})$ is given a normal distribution($\mathcal{N}(\mu,\sigma^{2})$) as 

\[\Psi\sim\mathcal{N}(n/2,\ n/4)\] 

For example, if $n$ equals $3000$ which is sufficiently large, the binomial distribution follows the normal distribution $\mathcal{N}(1500, 750)$, and we can determine the distribution is biased or not, using a probability density function, where the lower cumulative distribution $\mathcal{P}\left(x,1500,\sqrt{750}\right)$ is computed as 

\[\int^x_{-\infty }{\frac{1}{\sqrt{2\pi \sigma }}e^{-\frac{1}{2}{\left(\frac{x-1500}{\sigma }\right)}^2}} d \sigma.\]

If the number of cases satisfying $f(x)=y_0$ is $1600$ in $3000$ trials, $\mathcal{P}\left(1600,\ 1500,\sqrt{750}\right)=0.999869$. Here, the case is under probability $1 - 0.999869 = 0.000131$. \\

\subsubsection{Ratio comparison}The probability $0.000131$ is very low. However, we cannot determine how it is biased, because there are so many cases to be estimated in $(N-1)$th order permutation criteria. Thus, we must check the number of cases which is compared with the expected probability computed by lower cumulative distribution, and the rate of full cases. For instance, if the number of cases is $n=10^5$, then the expected number($EN$) of appearance cases is $0.000131\times n = 13.1$. This means that we can anticipate roughly 13 cases to exceed 1600 in 3000 trials under normal circumstances. Thus, we can compare the ratio between observed number and expected normal number. We call it a ratio comparison. \\

\subsubsection{Chi Square($\chi^2$) Test}
The probability variables following the normal distribution can be changed to Standard normal distribution $z=(\xi-\mu)/\sigma$. 
From the fact that each result is in distribution $\Psi\sim \mathcal{N}(n/2,\ n/4)$, we can make the random variables follow $\mathcal{N}(0,1)$.
If $\Psi_1,\Psi_2,\Psi_3,...,\Psi_q$ are independent and standard distribution variables, then the sum of their squares

\[Q=\sum^q_{k=1}{\Psi_k}\sim\chi^2\] 

Thus, $p$-value is able to be used for determine the probability of $(N-1)$th order permutation estimator. We use both $p$-values of Chi square distribution and Ratio comparison. 

\subsection{Application for any length of permutations}

We determine whether the sequence is biased or not by comparing the expected numbers with the observed numbers. This method allows for the expansion of the technique to accommodate permutations of any size through randomly selected experiments. The initial definition of the experiment was based on brute-force testing. However, by randomly selecting sample cases for experiments, we can represent the entire space with errors and measure the distance between the sample space and the original space.

\section{Fisher-Yates Shuffling and variants in an embedded system}\label{FYS}

The Fisher-Yates shuffling algorithm~\cite{FY}, is widely used as a hiding countermeasure. This is because it is straightforward for developers to implement, and it generates unbiased shuffling. The original method is described in Table~\ref{FS}. 

\begin{table}
    \centering
\setlength{\tabcolsep}{3pt}
\begin{tabular}{|p{300pt}|}
\hline
Input : An ordered Set $\{X_0,X_1,...,X_{t-1}\}$ where $X_i\in\mathbb{Z}_t$, \\
Output : An ordered Set $\{Y_0,Y_1,...,Y_{t-1}\}$ where $X_i\in\mathbb{Z}_t$\\
\hline
1. For $j$ from $t$ to $2$\\
~~~~1.1 $k$ is a random number between $1$ and $j$\\
~~~~1.2 Exchange $X_{k-1}$ and $X_{j-1}$\\
~~~~1.3 $Y_{j-1}\leftarrow X_{j-1}$\\
\hline
\end{tabular}
\caption{(Algorithm) Simple description of Fisher-Yates(FY) Shuffling}
\label{FS}
\end{table}

However, the algorithm carries potential risks that could result in a biased permutation, as explained below.

\begin{itemize}
\item \textbf{Incorrect Implementation} : Even with a correct algorithm in place, developers may still make mistakes in their code. Errors can easily occur when defining variables that determine the order or boundaries.
\item \textbf{Poor Random Source} : To generate a random permutation, random numbers derived from a 'good' random source are necessary.
\item \textbf{Application of Short Random Numbers} : When selecting a random integer within a given interval, there can be a bias in the selection. Short random numbers often cause this issue.
\end{itemize}

We will present two normal structures associated with random numbers, as well as two flawed structures, which serve as examples of incorrect implementation.

\subsection{Normal structures}
We mainly introduce two implementation for F-Y shuffling; Shuffling by reduction, and Shuffling by multiplication and division. 

\subsubsection{Shuffling by reduction}

The main point of difference for implementation of F-Y shuffling is choice of random number between $[1,j)$ the step 1.1 in Table~\ref{FS}. Because the value of variable $j$ is changed in each step, reduction operation is most widely used, see Algorithm in Table~\ref{knuth}. Reduction is normally high-redundancy operation.  

\begin{table}
    \centering
\setlength{\tabcolsep}{3pt}
\begin{tabular}{|p{300pt}|}
\hline
Input : An ordered Set $\{X_0,X_1,...,X_{t-1}\}$ where $X_i\in\mathbb{Z}_t$, and a source integer limitation $srt$\\
Output : An ordered Set $\{Y_0,Y_1,...,Y_{t-1}\}$ where $Y_i\in\mathbb{Z}_t$\\
\hline
1. For $j$ from $t$ to $2$\\
~~~~1.1 Generate a random integer $r\in\mathbb{Z}_{srt}$ \\
~~~1.2 $k=r$ mod $j$ (We expect $k$ is a random integer, between $0$ and $j-1$)\\
~~~~1.3 Exchange $X_{k}$ and $X_{j-1}$\\
~~~~1.4 $Y_{j-1}\leftarrow X_{j-1}$\\
\hline
\end{tabular}
\caption{(Algorithm) Shuffling by reduction}
\label{knuth}
\end{table}

\subsubsection{Shuffling by Multiplication and Division}
Here is another method for choosing a number in certain boundary as the Algorithm in Table~\ref{knuth_float}. We select a true number in \(r\leftarrow[0,1)\), and multiply the number and a integer $j$. Then, the integer part of $j\times r$ is in $0$ to $j-1$. If one can have trusted random number generator for floating point with $0$ to $1$, this algorithm generate a perfect permutation. Practically, we need to make a floating point in $0$ to $1$ with divide, such as generate a random number in $0$ to $u$, and divide $u$. As big as size of $u$, the number is generated uniformly. However, the redundancy of division operation is still problem.

\begin{table}
    \centering
\setlength{\tabcolsep}{3pt}
\begin{tabular}{|p{300pt}|}
\hline
Input : An ordered Set $\{X_0,X_1,...,X_{t-1}\}$ where $X_i\in\mathbb{Z}_t$, \\
Output : An ordered Set $\{Y_0,Y_1,...,Y_{t-1}\}$ where $Y_i\in\mathbb{Z}_t$\\
\hline
1. For $j$ from $t$ to $2$\\
~~~~1.1 Generate random number $r$, uniformly distributed between 0 to 1 (1 cannot be selected).\\
~~~1.2 $k\leftarrow \left \lfloor j\times r \right \rfloor+1$  (We expect $k$ is a random integer, between $1$ and $j$) \\
~~~~1.3 Exchange $X_{k-1}$ and $X_{j-1}$\\
~~~~1.4 $Y_{j-1}\leftarrow X_{j-1}$\\
\hline
\end{tabular}
\caption{(Algorithm) Shuffling by random floating point}
\label{knuth_float}
\end{table}

The Algorithm in Table~\ref{MD_shuffling} gives faster algorithm that is only switching multiplication and division. We can choose $rfn$ as $2^l$ then, division operation can be replaced by shift. It can be much faster than division and Shuffling by reduction.  

\begin{table}
    \centering
\setlength{\tabcolsep}{3pt}
\begin{tabular}{|p{300pt}|}
\hline
Input : An ordered Set $\{X_0,X_1,...,X_{t-1}\}$ where $X_i\in\mathbb{Z}_t$, and a refiner integer $rfn$\\
Output : An ordered Set $\{Y_0,Y_1,...,Y_{t-1}\}$ where $Y_i\in\mathbb{Z}_t$\\
\hline
1. For $j$ from $t$ to $2$\\
~~~~1.1 Generate a random integer $r\in\mathbb{Z}_{rfn}$ \\
~~~~1.1 $temp\leftarrow r\times j$\\
~~~~1.2 $k\leftarrow \left \lfloor temp/rfn \right \rfloor$  (We expect $k$ is a random integer, between $0$ and $j-1$, the operation $/$ can be replaced by shift where $rfn$ is power of 2) \\
~~~~1.3 Exchange $X_{k}$ and $X_{j-1}$\\
~~~~1.4 $Y_{j-1}\leftarrow X_{j-1}$\\
\hline
\end{tabular}
\caption{(Algorithm) Shuffling by Multiplication and Division}
\label{MD_shuffling}
\end{table}

\subsection{Bad structures}\label{bad_s}
In this section, we will introduce two well-known biased select functions for permutation: the Naive method and Sattolo's algorithm.

\subsubsection{NAIVE method}
The Naive method is implemented due to its superior speed compared to Fisher-Yates shuffling. The algorithm is delineated in Table \ref{NAIVE}. The primary difference between that and Fisher-Yates lies in the selection of numbers within a fixed boundary. If $t$ is $2^l$, this can be coded using bit-wise `And' operation. Proving this algorithm as biased or unbiased is challenging, so it's often used based on trust or the impression of randomness.

\begin{table}
    \centering
\setlength{\tabcolsep}{3pt}
\begin{tabular}{|p{300pt}|}
\hline
Input : An ordered Set $\{X_0,X_1,...,X_{t-1}\}$ where $X_i\in\mathbb{Z}_t$, \\
Output : An ordered Set $\{Y_0,Y_1,...,Y_{t-1}\}$ where $Y_i\in\mathbb{Z}_t$\\
\hline
1. For $j$ from $t$ to $2$\\
~~~~1.1 $k$ is a random number between $1$ and $t$\\
~~~~1.2 Exchange $X_{k-1}$ and $X_{j-1}$\\
~~~~1.3 $Y_{j-1}\leftarrow X_{j-1}$\\
\hline
\end{tabular}
\caption{(Algorithm) NAIVE method}
\label{NAIVE}
\end{table}

\subsubsection{Sattolo's Algorithm}
Sattolo's algorithm is a simple variant of the Fisher-Yates shuffling method that produces cyclic permutations of $(N-1)!$ instead of $N!$. This can be achieved by introducing a single index mistake during implementation, see Table~\ref{Satto}. The uniformity of this algorithm has also been proven for $(N-1)!$ cases~\cite{per_a2}. The advantage of this algorithm is that the result is a cyclic permutation. Although it is not necessary to use this algorithm for side-channel attack countermeasures, we introduce it as a case of failed implementation and present the experimental result in Section~\ref{Exp}.

\begin{table}
    \centering
\setlength{\tabcolsep}{3pt}
\begin{tabular}{|p{300pt}|}
\hline
Input : An ordered Set $\{X_0,X_1,...,X_{t-1}\}$ where $X_i\in\mathbb{Z}_t$, \\
Output : An ordered Set $\{Y_0,Y_1,...,Y_{t-1}\}$ where $Y_i\in\mathbb{Z}_t$\\
\hline
1. For $j$ from $t$ to $2$\\
~~~~1.1 $k$ is a random number between $1$ and $j-1$\\
~~~~1.2 Exchange $X_{k-1}$ and $X_{j-1}$\\
~~~~1.3 $Y_{j-1}\leftarrow X_{j-1}$\\
\hline
\end{tabular}
\caption{(Algorithm) Sattolo's algorithm}
\label{Satto}
\end{table}

\section{Security Analysis for real world environments}
\subsection{The experiment with Arbitrary sequences}
While we are able to verify perfect permutation using the $(N-1)$th order permutation criteria, we cannot manipulate each result to collect sequences under certain conditions. To verify arbitrarily generated permutations, we modify the $(N-1)$th order permutation to a "Rough" definition. Proposition~\ref{proposition_main} aids in creating this new definition to experiment with random sequences.

\begin{proposition}\label{proposition_main}
 A probability $P(A|B_1)=P(A|B_2)=...=P(A|B_n)=k$ and $(B_i\cap B_j)=\phi$ where $i\neq j$ then $P(A|\bigcup_{i=1}^{n}B_i)=k$.
\end{proposition}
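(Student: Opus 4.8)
The plan is to unwind the definition of conditional probability and then exploit the pairwise disjointness of the $B_i$ to collapse everything onto the hypothesis $P(A \mid B_i) = k$. First I would write
$$
P\Big(A \,\Big|\, \bigcup_{i=1}^n B_i\Big) = \frac{P\big(A \cap \bigcup_{i=1}^n B_i\big)}{P\big(\bigcup_{i=1}^n B_i\big)},
$$
which tacitly requires $P(\bigcup_i B_i) > 0$. Since each conditional probability $P(A \mid B_i)$ is assumed to exist, each $P(B_i) > 0$, and this positivity is inherited by the union, so the expression is well defined.

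Next I would distribute the intersection over the union, using $A \cap \bigcup_i B_i = \bigcup_i (A \cap B_i)$, and observe that because the $B_i$ are pairwise disjoint, so are the events $A \cap B_i$. Finite additivity of probability then turns both the numerator and the denominator into sums: $P\big(A \cap \bigcup_i B_i\big) = \sum_{i=1}^n P(A \cap B_i)$ and $P\big(\bigcup_i B_i\big) = \sum_{i=1}^n P(B_i)$.

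The final step is to substitute $P(A \cap B_i) = P(A \mid B_i)\, P(B_i) = k\, P(B_i)$ into the numerator and factor out the common constant $k$:
$$
P\Big(A \,\Big|\, \bigcup_{i=1}^n B_i\Big) = \frac{\sum_{i=1}^n k\, P(B_i)}{\sum_{i=1}^n P(B_i)} = k \cdot \frac{\sum_{i=1}^n P(B_i)}{\sum_{i=1}^n P(B_i)} = k.
$$
I do not expect any genuine obstacle here; the argument is essentially the observation that a weighted average of identical values equals that common value, with the weights being the $P(B_i)$. The only point demanding care is the well-definedness of all the conditional probabilities, that is, confirming $P(B_i) > 0$ for every $i$ so that each $P(A \mid B_i)$ makes sense, and hence $\sum_i P(B_i) > 0$ so that the left-hand conditional probability is defined and the final cancellation is legitimate.
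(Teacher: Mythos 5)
Your proposal is correct and follows essentially the same route as the paper's proof in Appendix~A: expand the conditional probability by definition, use pairwise disjointness to turn the numerator and denominator into sums, substitute $P(A\cap B_i)=k\,P(B_i)$, and cancel. Your added remark on well-definedness (each $P(B_i)>0$, hence $P(\bigcup_i B_i)>0$) is a minor point of extra care that the paper leaves implicit, and you correctly write the numerator as a sum where the paper has a typographical union of probabilities.
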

\begin{proof}
The proof is in Appendix~\ref{proof_proposition}. 
\end{proof}

By the Proposition~\ref{proposition_main}, we can define Approximate $(N-1)$th order permutation below,

\begin{definition}
\textup{A select function $\gamma$ is said to possess the Approximate $(N-1)$th order permutation property if, for a randomly selected $t$ such that $\gamma(t) = f_m$, for all $m \in \mathbb{Z}_{N!}$ and distinct $x, y, a_{ij}, b_{ij} \in \mathbb{Z}_N$ where $x \neq a_{ij}$ and $y \neq b_{ij}$, the following condition holds:}

\[ P\left(f_m(x) = y \Big| \bigcup_{j=1}^{(N-2)!} \bigcap_{i=1}^{N-2} \{f_m(a_{ij}) = b_{ij}\}\right) = \frac{1}{2} \]

\textup{Additionally, it must satisfy the condition that for all $x, y \in \mathbb{Z}_N$:}

\[ P(f_m(x) = y) \neq 0 \]

\textup{where, for $k \neq l$, we have $a_{kj} \neq a_{lj}$ and $b_{kj} \neq b_{lj}$}

\end{definition}

For instance, in the $(N-1)$th order permutation definition for a 4-indices permutation, we have $P(f(2)=1|f(0)=2, f(1)=3)=1/2$. If we extend the Approximate $(N-1)$th order permutation, it becomes $P(f(2)=1|f(a_1)=b_1, f(a_2)=b_2)=1/2$, where $a\in A, b\in B, A={0,1}, B={2,3}$. According to Proposition~\ref{proposition_main}, if a select function $\gamma$ provides the $(N-1)$th order permutation, it also provides the Approximate $(N-1)$th order permutation. However, if $\gamma$ has the Approximate $(N-1)$th order permutation, it doesn't necessarily guarantee the $(N-1)$th order permutation.

Therefore, we can infer that if the select function does not have the Approximate $(N-1)$th order permutation, it lacks perfect random security.

\subsection{Error and limitation}

The error probability between the Approximate $(N-1)$th order permutation and perfect random permutation can be computed by evaluating the probability that makes the reverse statement of Proposition~\ref{proposition_main} true. Specifically, if the probability $P(A|\bigcup_{i=1}^{M}B_i)=k$ and $(B_i\cap B_j)=\phi$ where $i\neq j$, then $P(A|B_1)=P(A|B_2)=...=P(A|B_M)=k$.

Setting $P(A|B_1)=p_1, P(A|B_1)=p_2, ..., P(A|B_M)=p_M$, \newline
we have $P(A|\bigcup_{i=1}^{M}B_i)=(p_1P(B_1)+p_2P(B_2)+...+p_MP(B_M))/(P(B_1)+P(B_2)+...+P(B_M))=k$. We can express this equation as $\sum_{i=1}^{M}p_iP(B_i)=k\times\sum_{i=1}^{M}P(B_i)$. If $\gamma$ has $(N-1)$th order permutation case $p_1=p_2=...=p_M$ equals to $k$. Hence, the error probability $\mathbf{e}$ is $P(\sum_{i=1}^{M}p_iP(B_i)=k\times\sum_{i=1}^{M}P(B_i))$ and there must exist at least one $p_i\neq k$.

Intuitively, the value of the error probability is extremely small due to the large number of variables involved. For $(N-1)$th order permutation, $M=(N-2)!$, and the computation involving the combination of each variable $\sum_{i=1}^{M}p_iP(B_i)$ must equal $k\times\sum_{i=1}^{M}P(B_i)$.

If the number of variables in the equation decreases, the error probability would increase. Therefore, to minimize error values, we must verify whether the algorithm (or select function $\gamma$) generates all possible permutations, regardless of potential bias.

If one intentionally generates certain sequences, the complexity distance between sequences verified by the Approximate $(N-1)$th order permutation and perfect random security is substantial. However, unintentionally biased sequences, generated for randomness, are well-verified by the Approximate estimator. As a toy example of a permutation which is an Approximate $(N-1)$th order permutation but not $(N-1)$th order, refer to Figure~\ref{constrained}. Assuming that 4! cases are uniformly distributed initially, each case has an equal probability of occurrence $1/24$. We set the first case (1,2,3,4) to be biased with $\alpha_1\neq 1/24$, while the others remain uniformly distributed. Then, given that the permutation is an Approximate $(N-1)$th order, $\alpha_1+\alpha_2=\alpha_7+\alpha_8$, and $\alpha_1+\alpha_2+\alpha_7+\alpha_8=4/24$. Thus, $\alpha_1+\alpha_2$ is $2/24$ and $\alpha_2$ is also determined. $\alpha_1$ influences 6 values with a single transposition from (1,2,3,4), for example (1,2,4,3) that is transposed by (3,4). In this manner, the 6 values determined by $\alpha_1$ can affect the other 6 values via related equations such as $\alpha_1+\alpha_3=\alpha_{22}+\alpha_{24}$ in Figure~\ref{constrained}. As a result, the probabilities of each case depend on other cases. Despite the many variables, the biased shape appears quite systematic. The final solution of this example for each probability is classified into only two values $\alpha_1$ and $\alpha_2$. In this case, the set of even permutation ($A_4$) is an exceptional case of Approximate $(N-1)$th order permutation but not perfect random permutation. However, it is extremely hard that a set of permutations which is generated by an arbitrary algorithm becomes $A_n$.  

There are two primary non-uniform permutation models: the biased model and the constrained model~\cite{Phd_T}. The biased model refers to an algorithm that aims for uniformity but manifests a biased form. The constrained bias model, on the other hand, shows a biased pattern in a specific shape. Therefore, the case of an Approximate $(N-1)$th order permutation, which is not $(N-1)$th order, typically represents the constrained model. Conversely, the Fisher-Yates shuffling algorithm generates a perfect random permutation, so we can anticipate verification with the approximate estimator.

\begin{figure*}

\includegraphics[width=13cm]{./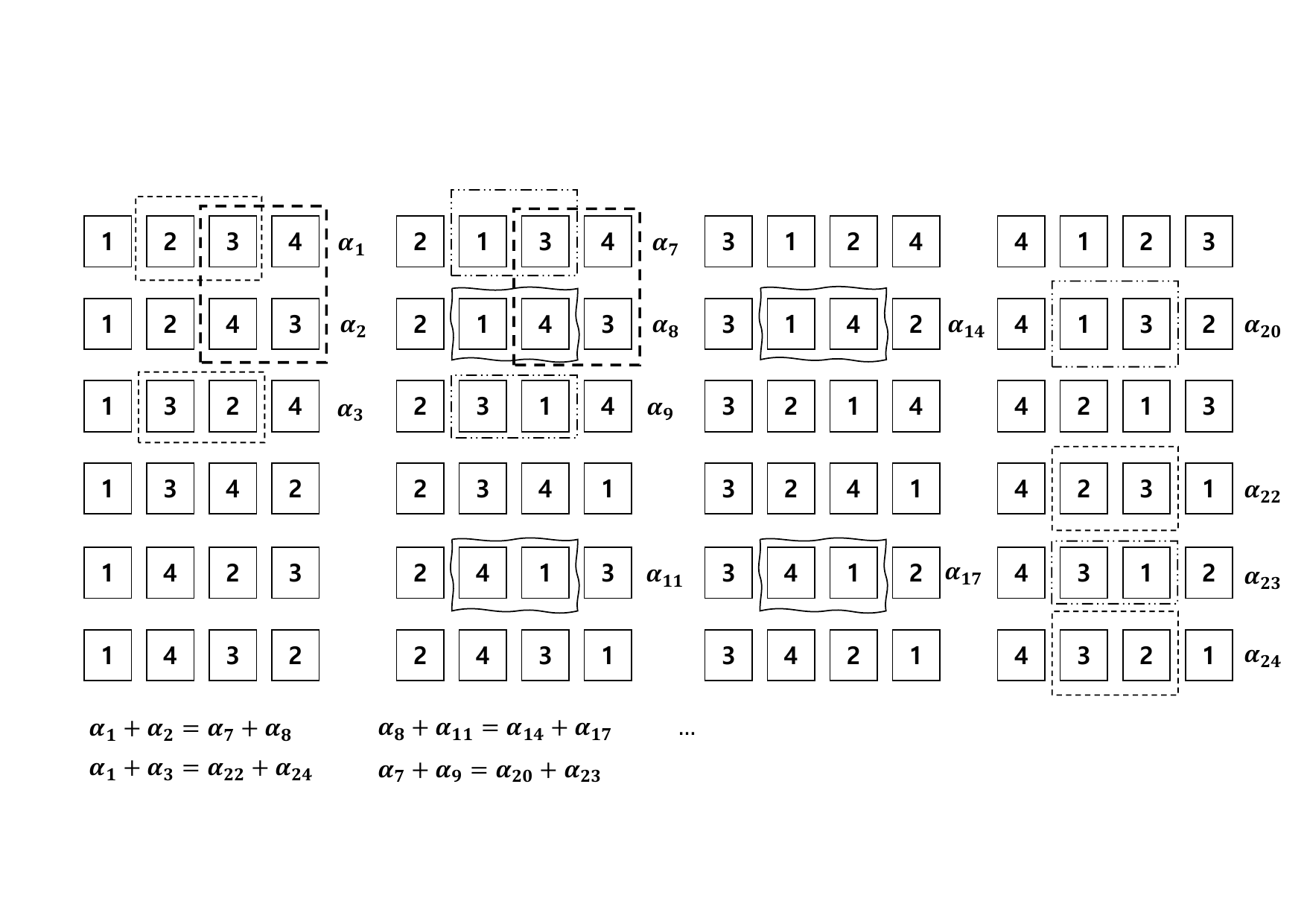}
\centering
\caption{A progress to become for an Constrained model of Approximate $(N-1)$th order permutation; from $(N-1)$th order permutation where $N=4$.}
\label{constrained}
\end{figure*}

\subsection{Time and Space for analysis}

\begin{table}
\setlength{\tabcolsep}{3pt}
\begin{tabular}{|p{330pt}|}
\hline
Input : $RN$(Random string, given length and properties), select function $\gamma$ \\
\hline
// Collecting Phase, Collect $M=(N!\times T)$ permutations by $\gamma$\\
1. $i$ from $0$ to $M-1$\\
~~~~~~1.1 $s_i = \gamma(RN_i)$~~// $RN_i$ is $i-th$ segment in $RN$.\\
// 2. Analysis Phase \\
2. $i$ from $0$ to $M-1$\\
~~~~~~2.1 Get numbers for each case from the permutation $s_i$\\
// There are $N!$ cases, update the count. \\
// Analysis Phase - Get Statistics and final result\\
3. $i$ from $0$ to $N!$\\
~~~~~~3.1 Get $\rho_i$ // normalized values of each case\\
~~~~~~3.2 SUM = SUM + $\rho_i$
// Analysis Phase 2 - Getting final result \\
4. return $p$-values of Chi-Square($\chi^2$) from SUM \\
\hline
\end{tabular}
\centering
\caption{Pseudo code of brute-force analysis}
\label{Brute_al}
\end{table}

\begin{table}
\setlength{\tabcolsep}{3pt}
\begin{tabular}{|p{330pt}|}
\hline
Input : $RN$(Random string, given length and properties), select function $\gamma$ \\
\hline
// Collecting Phase, Collect $M=T\times\binom{N}{N-2}\times \binom{N}{N-2}$ permutations by $\gamma$\\
1. $i$ from $0$ to $M-1$\\
~~~~~~1.1 $s_i = \gamma(RN_i)$~~// $RN_i$ is $i-th$ segment in $RN$.\\
// 2. Classify Phase \\
2. $i$ from $0$ to $M-1$\\
~~~~~~2.1 Getting numbers for each location from the permutation $s_i$\\
// There are $\binom{N}{N-2}$ locations, update the count. \\
// Analysis Phase 1 - Getting Statistics\\
3. $i$ from $0$ to $\binom{N}{N-2}\times \binom{N}{N-2}$\\
~~~~~~3.1 Get Function values($p_i$) by Probability density of each binomial distribution. \\
// Analysis Phase 2 - Getting final result \\
4. return ratio values(for ratio comparison) and $p$-values by Chi-Square($\chi^2$) distribution. \\
\hline
\end{tabular}
\centering
\caption{Pseudo code of Approximate $(N-1)$th order analysis}
\label{PCA}
\end{table}

\begin{table}
\caption{Time and Space for the permutation analysis}
\setlength{\tabcolsep}{3pt}
\begin{tabular}{|p{120pt}|p{80pt}|p{120pt}|}
\hline
   &   Brute force analysis & Approximate $(N-1)$th order  \\
\hline
Time for Collecting Samples  & $\mathcal{O}(T\times N!)$ & $\mathcal{O}(T\times N(N-1)/2)$  \\
Space for Collecting Samples & $\mathcal{O}(T\times N!)$ & $\mathcal{O}(T\times N(N-1)/2)$  \\
Space for Analysis&$\mathcal{O}(T\times N!)$ &$\mathcal{O}(T\times N^2(N-1)^2/4)$  \\
Time for Analysis&$\mathcal{O}(T\times N!)$ &$\mathcal{O}(T\times N^2(N-1)^2/4)$ \\
\hline
\end{tabular}
\centering
\label{TiSp}
\end{table}

\begin{figure}
\includegraphics[width=12cm]{./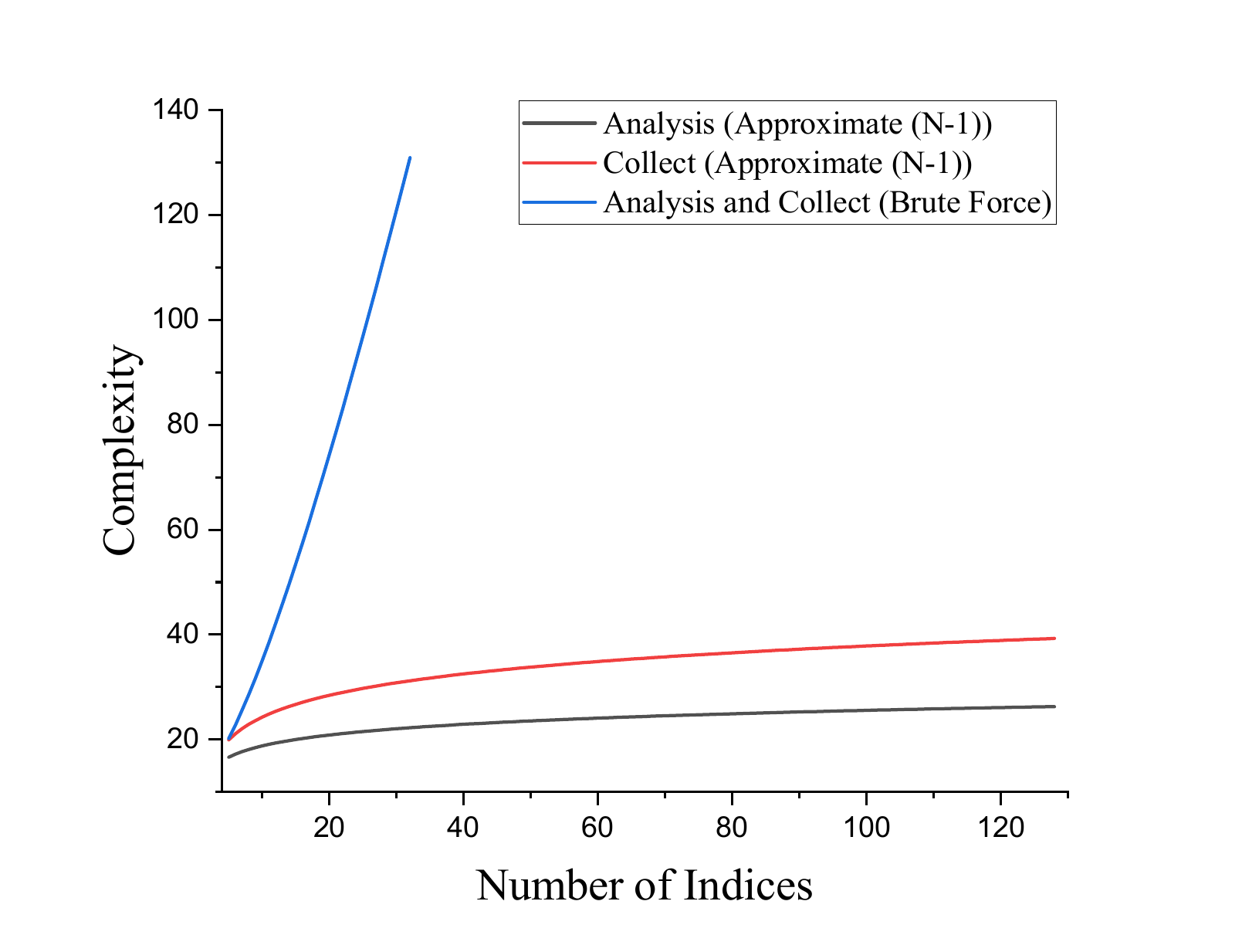}
\centering
\caption{Increasing of Time and Space for Approximate $(N-1)$th and Brute-Force Test. (The time and space complexity for Brute-force)}
\label{ITS_PIC}
\end{figure}

The time and space complexity for the analysis method are outlined in Table~\ref{TiSp}. $T$ is the number of samples for each test. We compare this with brute force sampling and the $\chi^2$ test for uniform distribution with collected permutations, as seen in Algorithm Table~\ref{Brute_al}. In brute-force sampling, $T\times N!$ samples are required for one significant test. Practically, we use the Approximate $(N-1)$th order estimate in Table~\ref{PCA}, which requires $\mathcal{O}(T\times (N(N-1))/2)$ time for analysis and sampling. This might seem like it's calculated based on the number of two selections $N(N-1)/2$ of inputs and outputs. However, each permutation is reusable for each input. If $N$ is 32, then the time and space complexity is $496\times T$. In our experimental result, we generate $10^7$ samples, so $T$ equals $10^7/496=21,161$, which represents the number of samples for each test. However, the complexity of analysis is $\mathcal{O}(T\times N^2(N-1)^2/4)$, because we want to obtain all results about possible inputs and outputs. The collection phase and the classification phase can be merged into a single phase, so no extra space for collected permutations is required.

Figure~\ref{ITS_PIC} illustrates the increase in space and time for the analysis on a $\log_2$ scale. The y-axis represents the bit-wise complexity of the permutation, which shows how the speed of time and space for the brute-force approach increases in relation to $N$. It is impossible to perform brute force analysis when $N$ is exceedingly large. Also, the approximate estimator is practically unfeasible when $N>50$, although this depends on the environment.

\begin{table}
\caption{Actual Time and Space for the permutation analysis}
\setlength{\tabcolsep}{3pt}
\begin{tabular}{|p{100pt}|p{50pt}|p{70pt}|p{70pt}|}
\hline 
                        -                     & $N$                                         & Brute force analysis   & Approximate $(N-1)$th order \\ 
\hline                                                                                                                          
Actual Time                                  &$N=8$                                        & 241 Sec.               & 0.1 Sec.                    \\ 
for collecting Samples                       &$N=32$                                       & $5^{27}$ Years         & 3 Sec.                      \\ 
                                             &$N=128$                                      & $7.3^{205}$ Years      & 48 Sec.                     \\ 
                                                                                                                                
\hline                                                                                                                          
Actual Space                              &$N=8$                                        & 24GB                  & 2187KB                       \\ 
for collecting Samples                    &$N=32$                                       & $2.4^{30}$TB          & 151MB                        \\ 
(Can be omitted)                          &$N=128$                                      & $5.7^{211}$TB         & 9.6GB                        \\ 

\hline                                                                                                                          
Actual Space for           &$N=8$                                        & 252KB             & 3.1KB                  \\ 
Analysis                   &$N=32$                                       & $9.5^{25}$TB      & 961KB                  \\ 
                           &$N=128$                                      & $1.4^{207}$TB     & 252MB                  \\

\hline                                                                                                                          
Actual Time for            &$N=8$                                        & 40 Sec.               & 15 Sec.                      \\
Analysis                   &$N=32$                                       & $8.3^{24}$ Years      & 1.3 Hours                    \\ 
                           &$N=128$                                      & $1.2^{205}$ Years     & 15.2 Days                    \\

\hline
\end{tabular}
\centering
\label{ATiSp}
\end{table}

Table~\ref{ATiSp} displays the actual time and space required for the analysis of each method, based on an Intel(R) Core(TM) i5-8500 CPU @ 3.00GHz. The results can vary with different implementation or improvements in the permutation algorithms. As a result, brute force analysis is practically unfeasible in the real world. While our method also requires a substantial amount of time and sizable space when $N$ is 128, it can still analyze a reduced number of cases.

\subsection{Reduced cases for Large scale permutations}

Another advantage of our estimator is that even if only some of the generated permutations are tested, the results similar to those of a complete test can be obtained. This characteristic can be utilized effectively to verify large permutations. For instance, if we verify the permutation of $N=128$, the analysis time is $(128\times 127)^2/4=66,064,384$. However, as all test cases independently follow the standard normal distribution, similar trend results can be obtained by experimenting with only a subset of them. Therefore, even if $N=128$, it is feasible to select and test an appropriately-sized set of input and output from the $66,064,384$ test cases. A comparison of the overall test results with the reduced test case will be presented in the experimental results.

\section{Experimental Result}\label{Exp}

\subsection{Experimental Result - F-Y shuffling variants}\label{compare_FY}
We choose $N=32$ as the input message size for a single trace attack on CRYSTALS Kyber. We conduct experiments based on the Approximate $(N-1)$th order permutation with the following steps:

\begin{itemize}
    \item Collect a large number of arbitrary random sequences from the target algorithm and random inputs. In our experiments, we generate $10^7$ permutations.
    \item Classify the sequences based on $\binom{N}{N-2}$ input conditions and $\binom{N}{N-2}$ output conditions, equivalent to $\binom{N}{2}\times \binom{N}{2}$ cases. In our experiments where $N=32$, the total number of cases for verification is $246016$.
    \item Determine the quality of outputs using the security determination method explained in section~\ref{SD}.
    \begin{itemize}    
        \item Classify each case by the probability $\alpha$, as outlined in Table~\ref{xaxis}. The number of cases where $P(x\geq X)>\alpha$, Here, $X\sim\mathcal{N}(\mu,\sigma^2)$ and $\alpha$ is the probability criteria as shown in Table~\ref{xaxis}.
        \item Compare the expected number$(EN)$ of cases with the observed numbers $(ON)$. The comparison result is expressed as the log scale of the ratio between observed numbers$(ON)$ and expected numbers$(EN)$, denoted as $log_2(ON/EN)$.
    \end{itemize}
\end{itemize}

\begin{table}
\setlength{\tabcolsep}{3pt}
\begin{tabular}{|p{90pt}|p{90pt}|p{90pt}|}
\hline
Index(x-axis for Figure~\ref{rng_reduction},\ref{rng_fast},\ref{compare_pic}, and \ref{sample_result})& $\alpha$ & Expected Number($EN$) of cases, $N=32$, $\alpha\times 246016$
\\
\hline
1&0.55&221415\\
2&0.60&196813\\
3&0.70&147610\\
4&0.80&98406\\
5&0.90&49203\\
6&0.95&24601\\
7&0.99 (=$1-10^{-2}$)&4920.34\\
8& $1-10^{-3}$ &492.034\\
9& $1-10^{-4}$ &49.20\\
10& $1-10^{-5}$ &4.92034\\
11& $1-10^{-6}$ &0.49203\\
12& $1-10^{-7}$&0.04920\\
13& $1-10^{-8}$&0.004920\\
14& $1-10^{-9}$&0.000492\\
15& $1-10^{-10}$&0.000049\\
16& $1-10^{-11}$&0.000005\\
\hline
\end{tabular}
\centering
\caption{The value of confidence level(The x-axis of each Figures)}
\label{xaxis}
\end{table}

As discussed in Section~\ref{FYS}, there are three types of error cases in permutation generation. To test these possibilities, we conducted experiments using the following classifications:

Firstly, to evaluate the effect of the random source on the implementation, we utilized two different random functions. As an example of a non-random source, we used the built-in rand() function from Microsoft Visual C's standard library~\cite{MSDN}. This function uses a technique known as a linear congruential generator with a period of $2^{16}$, which is an example of a random source unsuitable for cryptographic purposes. For a well-implemented random source, we simulated a random number generator using AES. We conducted a loop with a fixed key, using the output value from the initial input value as the next input value.

Secondly, we investigated the length of the random number used in the algorithms. We could experiment with bit selection from 5 bits for shuffling 32 positions. However, a bias would obviously appear for 5 bits, and since implementers gain little performance benefit from using less than 1 byte, we applied a minimum of 8 bits and a maximum of 15 bits.

Thirdly, we examined the impact of incorrect implementation. For this, we implemented and tested the NAIVE and Sattolo methods, as described in Section~\ref{bad_s}. To isolate the results due to faulty implementation, we used 16 bits and AES (which is 1 bit larger than the maximum bit).

\begin{figure}[H]
\centering
\begin{subfigure}{0.45\linewidth}
\caption{}
\label{norng_reduction}
\includegraphics[width=\linewidth]{./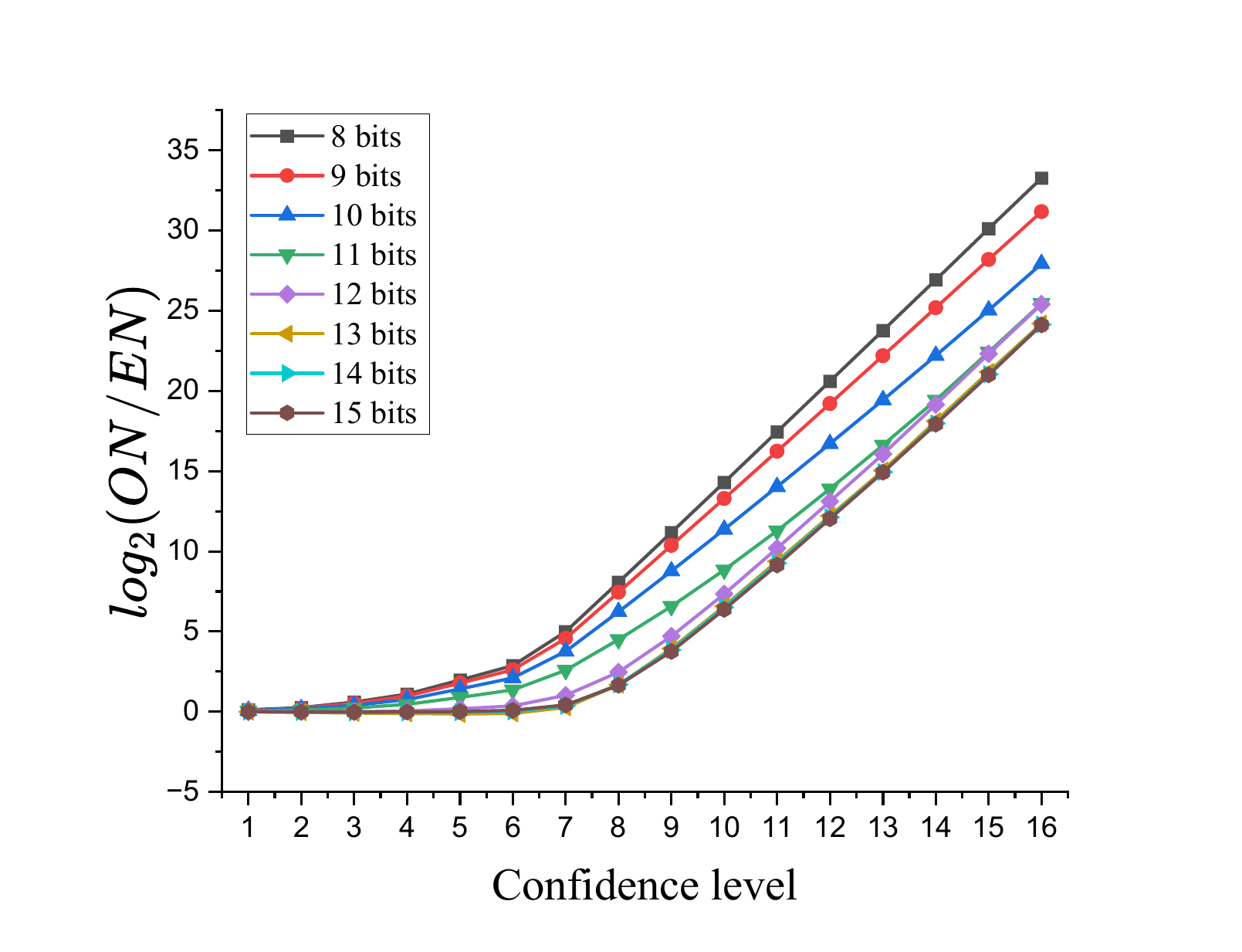}
\end{subfigure}
\hfill
\begin{subfigure}{0.45\linewidth}
\caption{}
\label{rng_reduction}
\includegraphics[width=\linewidth]{./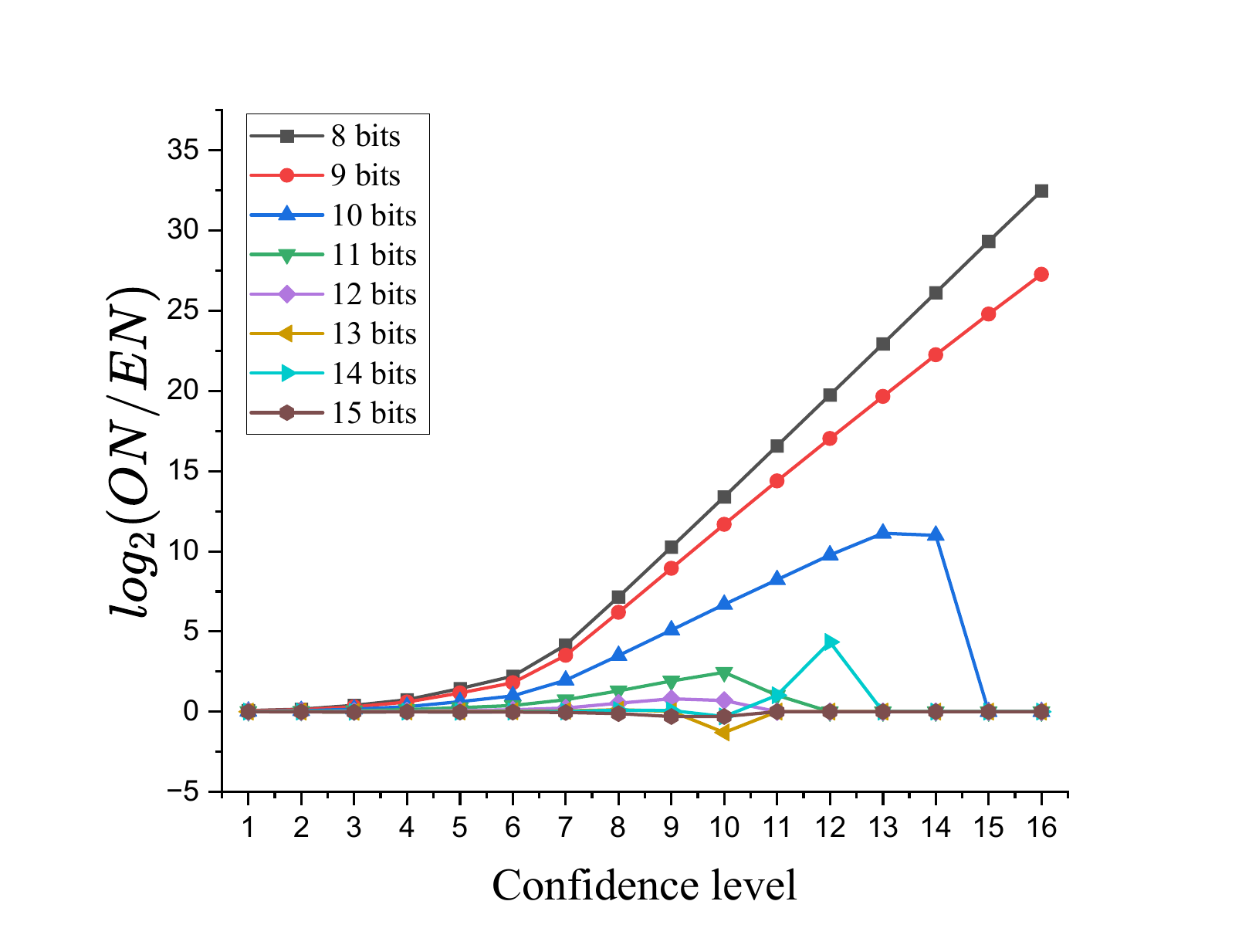}
\centering
\end{subfigure}
\centering

\caption{(a) F-Y shuffling with Standard library RNG and Reduction and (b) F-Y shuffling with AES RNG and Reduction}
\end{figure}

\begin{figure}[H]
\centering
\begin{subfigure}{0.45\linewidth}
\caption{}
\label{norng_fast}
\includegraphics[width=\linewidth]{./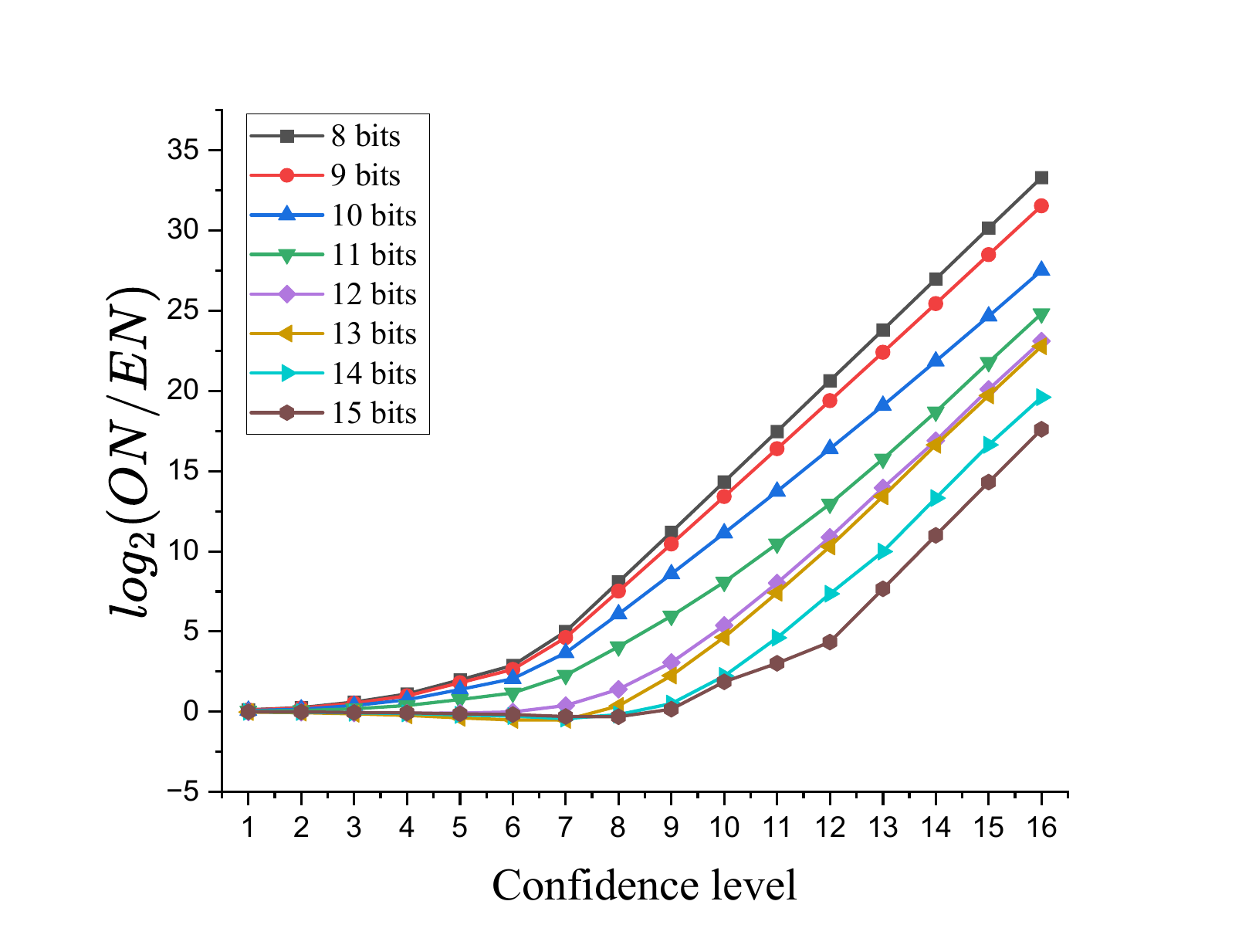}
\end{subfigure}
\hfill
\begin{subfigure}{0.45\linewidth}
\caption{}
\label{rng_fast}
\includegraphics[width=\linewidth]{./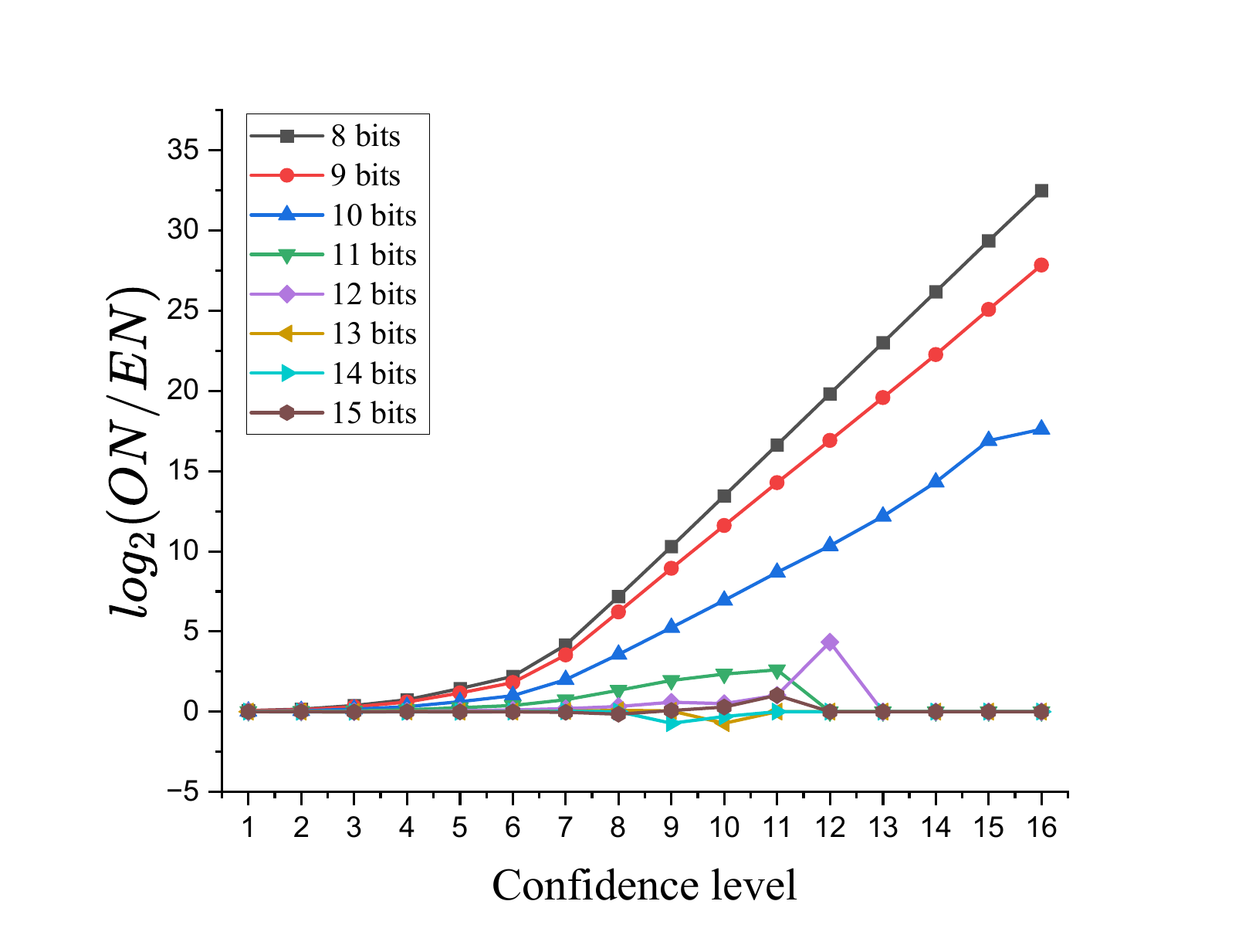}
\centering
\end{subfigure}
\centering

\caption{(a) F-Y shuffling with Standard library RNG and multiplication-division and (b) F-Y shuffling with AES RNG and multiplication-division}

\end{figure}

\subsubsection{Analysis of the each graph(Ratio Comparison)}

The x-axis represents the value of the probability density level, as shown in Table~\ref{xaxis}. In the result of $246016$ cases, since each case follows a normal distribution, we can compute the expected number of cases for $\alpha$ as $\alpha \times 246016$. The y-axis is a log scale of the ratio between observed numbers $(ON)$ and expected numbers $(EN)$, denoted as $Log_2(ON/EN)$. 

If $\gamma$ exhibits an Approximate $(N-1)$th order permutation, the expected number $(EN)$ will be very close to the observed number $(ON)$, implying that $(ON)/(EN)\approx 1$. Therefore, if $\gamma$ has perfect random security, the values on the y-axis for each line should be close to 0. Conversely, if the values on the y-axis for each line are high, this indicates bias. The value on the y-axis reveals the degree to which the permutation deviates from perfect randomness.

Figure~\ref{norng_reduction} presents the $(N-2)$th order permutation evaluation from 8 bits to 15 bits, with different colors and symbols representing the standard library rand() source and Fisher-Yates shuffling by reduction, as described in the algorithm in Table~\ref{knuth}. Despite the y-axis being expressed on a log scale, the results show slight improvement from 8 bits to 15 bits. A value of $10 (=35-25)$ represents a $2^{10}$ difference.

Figure~\ref{norng_fast} displays the results with an AES-based RNG and shuffling by reduction. The results suggest that 11 bits of random number are sufficient to obtain random permutations if the random source is reliable.

\begin{figure}

\includegraphics[width=12cm]{./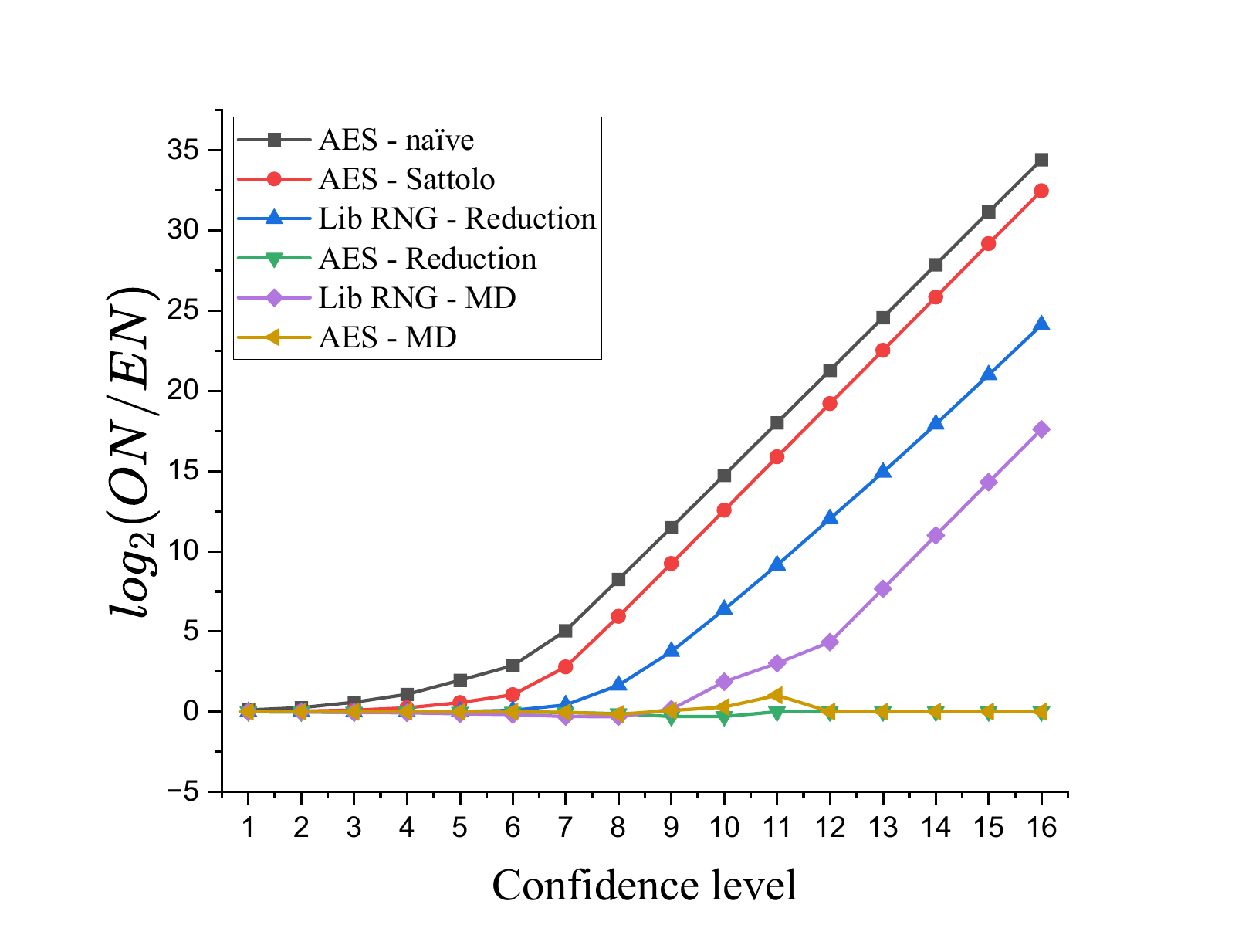}
\centering

\caption{NAIVE 16 bits, Ssatolo 16 bits and Fisher-Yates variants 15bits; y-axis is ratio of (ON/EN) and x-axis is the confidence level given by Table~\ref{xaxis}. Having a high value implies having a high bias (a bad outcome).}
\label{compare_pic}
\end{figure}

Figure~\ref{norng_fast} and Figure~\ref{rng_fast} display results using the standard library rand() and an AES-based RNG, respectively, with the shuffling implementation employing the Multiplication and Division method, as outlined in Algorithm Table~\ref{MD_shuffling}. The results are quite similar to those obtained through shuffling by reduction. Interestingly, the Multiplication and Division method yields better results than shuffling by reduction when using a subpar random source.

This can be directly compared in Figure~\ref{compare_pic}, which depicts two examples of poor structures, Ssatolo and NAIVE, with a 16-bit RNG source. Despite utilizing an AES-based RNG and 16-bit randomness, these results are worse than the 15-bit outcomes of standard F-Y shuffling, regardless of the implementation. The comparison between the MD shuffling and reduction methods in Figure~\ref{compare_pic}, at x-axis 16 (the case of $\alpha=1-(1/10^{11})$), shows results of approximately 17 and 24. We can therefore conclude that MD shuffling is about $2^7$ times better with the Approximate $(N-1)$th order permutation criterion.

In terms of the implementation of F-Y shuffling, the MD algorithm is not only significantly faster (when $N=2^r$) than reduction, but also generates greater randomness. Thus, we strongly recommend using the MD algorithm for F-Y shuffling, particularly when dealing with an unreliable random source.

\subsubsection{Analysis by $p$-value of Chi Square($\chi^2$)}

Table~\ref{reault_chi_all} presents the results of the Chi-Square ($\chi^2$) test from 8 bits to 15 bits with F-Y reduction and the Multiplication-Division method. Each number can be interpreted as the error probability when asserting that the distribution is biased. In other words, a value closer to 1 indicates a more uniform distribution, while a value closer to 0 points to a biased distribution.

Curiously, in some cases, the $\chi^2$ test identifies distributions as uniform even when the graph analysis suggests bias. Both analyses occur within the Approximate $(N-1)$th order estimator, so users can verify using both methods. The $\chi^2$ test's blind spot is that it may not detect a single, extremely anomalous case, the significance of which depends on how statistical test results are interpreted. Conversely, the strength of graph analysis lies in its ability to capture the occurrence of a single normal case and analyze the probability. Therefore, these two methods can be used in a complementary manner.

\begin{table}
\centering
\begin{tabular}{|p{30pt}|p{50pt}|p{50pt}|p{50pt}||p{50pt}|}
\hline
&  \multicolumn{2}{|c|}{Reduction} & \multicolumn{2}{|c|}{M-D} \\
\hline
RNG Bits & rand() & AES & rand() & AES  \\
\hline

8 &    \underline{$<10^{-5}$} &   \underline{$<10^{-5}$} &   \underline{$<10^{-5}$} &   \underline{$<10^{-5}$}   \\ 
9 &   \underline{$<10^{-5}$} &   \underline{$<10^{-5}$} &   \underline{$<10^{-5}$} &   \underline{$<10^{-5}$}   \\ 
10 &   \underline{$<10^{-5}$} &   \underline{$<10^{-5}$} &   \underline{$<10^{-5}$} &   \underline{$<10^{-5}$}   \\ 
11 &   \underline{$<10^{-5}$} &   \underline{$<10^{-5}$} &   \underline{$<10^{-5}$} &   \underline{$<10^{-5}$}   \\ 
12 &   \underline{$<10^{-5}$} &   \underline{$<10^{-5}$} &   \underline{$<10^{-5}$} &   \underline{$<10^{-5}$}   \\ 
13 &   \underline{$<10^{-5}$} &   \underline{$<10^{-5}$} &   1.00000 &                            \underline{0.00062}   \\ 
14 &   \underline{$<10^{-5}$} &   \underline{0.03863} &                   1.00000 &                                     0.27722   \\ 
15 &   \underline{$<10^{-5}$} &   0.99856 &                            1.00000 &                                     0.58224 \\ 

\hline
\end{tabular}
\caption {$1-$($p$-value) of Chi-square($\chi^2$) test by different random sources, two F-Y implementations, and input random bits of Approximate $(N-1)$th order estimation. (\underline{underline} number shows that $1-$($p$-value) is under $0.05$)}
\label{reault_chi_all}
\end{table}

\subsection{Comparison between brute-force and Approximate $(N-1)$th order analysis}

\begin{table*}

\centering
\begin{tabular}{|p{40pt}|p{40pt}|p{40pt}|p{40pt}||p{40pt}|p{40pt}|p{40pt}|}
\hline
&  \multicolumn{3}{|c|}{Brute-force test} & \multicolumn{3}{|c|}{Approximate $(N-1)$th estimation} \\
\hline
RNG Bits & $6$ & $7$ & $8$ & $6$ & $7$ & $8$ \\
\hline
24&  0.97071&  0.78157&  0.99129&  0.58029&  0.99470&  0.59607\\  
23&  0.60195&  0.85327&  0.22004&  0.76510&  0.96337&  0.93582\\  
22&  0.89910&  0.24739&  0.86581&  0.97857&  0.58073&  0.54206\\  
21&  0.18829&  0.40800&  0.52338&  0.05288&  \underline{0.00082}&  \underline{0.03122}\\  
20&  0.35210&  0.52689&  0.39091&  0.62381&  0.78894&  0.16322\\  
19&  0.96127&  0.17897&  0.65415&  0.97312&  0.75729&  \underline{0.01795}\\  
18&  0.24515&  0.86401&  0.86514&  \underline{0.00010}&  0.99973&  0.34465\\  
17&  \underline{0.03038}&  0.36051&  0.89479&  0.37356&  0.11754&  0.68689\\  
16&  0.45803&  0.37862&  0.82553&  0.83978&  0.24456&  0.44144\\  
15&  0.14034&  0.53689&  0.63005&  \underline{0.02683}&  0.19706&  0.40542\\  
14&  \underline{0.04169}&  0.86262&  0.05209&  0.35372&  0.99926&  \underline{0.00194}\\  
13&  \underline{0.01777}&  \underline{0.00799}&  0.45568&  \underline{$<10^{-5}$}&  \underline{0.00001}&  \underline{0.00414}\\  
12&  \underline{0.01102}&  0.21629&  0.31543&  \underline{$<10^{-5}$} & \underline{$<10^{-5}$}&  \underline{$<10^{-5}$}\\  
11&  \underline{$<10^{-5}$}&  \underline{$<10^{-5}$}&  \underline{0.00018}&  \underline{$<10^{-5}$}& \underline{$<10^{-5}$}& \underline{$<10^{-5}$}\\  
10&  \underline{$<10^{-5}$}&  \underline{$<10^{-5}$}&  \underline{$<10^{-5}$}&  \underline{$<10^{-5}$}& \underline{$<10^{-5}$}& \underline{$<10^{-5}$}\\  
9&   \underline{$<10^{-5}$}&  \underline{$<10^{-5}$}&  \underline{$<10^{-5}$}&  \underline{$<10^{-5}$}& \underline{$<10^{-5}$}& \underline{$<10^{-5}$}\\  
8&   \underline{$<10^{-5}$}&  \underline{$<10^{-5}$}&  \underline{$<10^{-5}$}&  \underline{$<10^{-5}$}& \underline{$<10^{-5}$}& \underline{$<10^{-5}$}\\  
7&   \underline{$<10^{-5}$}&  \underline{$<10^{-5}$}&  \underline{$<10^{-5}$}&  \underline{$<10^{-5}$}& \underline{$<10^{-5}$}& \underline{$<10^{-5}$}\\  
6&   \underline{$<10^{-5}$}&  \underline{$<10^{-5}$}&  \underline{$<10^{-5}$}&  \underline{$<10^{-5}$}& \underline{$<10^{-5}$}& \underline{$<10^{-5}$}\\  
5&   \underline{$<10^{-5}$}&  \underline{$<10^{-5}$}&  \underline{$<10^{-5}$}&  \underline{$<10^{-5}$}& \underline{$<10^{-5}$}& \underline{$<10^{-5}$}\\  
4&   \underline{$<10^{-5}$}&  \underline{$<10^{-5}$}&  \underline{$<10^{-5}$}&  \underline{$<10^{-5}$}& \underline{$<10^{-5}$}& \underline{$<10^{-5}$}\\
\hline
\end{tabular}
\caption{$1-$($p$-value) of Chi-square($\chi^2$) test by the number of inputs are $6$, $7$, and $8$ by Brute-force test and Approximate $(N-1)$th order estimation. (\underline{underline} number shows that $1-$($p$-value) is under $0.05$)}
\label{compare_perfect}
\end{table*}

To verify the reliability of the Routh $(N-1)$th order permutation estimator, we compared its results with a brute-force test of all permutations with a small number of inputs and outputs; specifically, $6$, $7$, and $8$. We used $10^8$ permutations with the Multiplication and Division method with an AES-based RNG, varying the input bits from $24$ to $4$. Given that we used $10^8$ permutations, there are $6!$, $7!$, and $8!$ permutations for each respective case.

Table~\ref{compare_perfect} displays both the total survey results and the results when using the Approximate $(N-1)$th order estimator. To accurately interpret these results, recall Proposition~\ref{proposition_main}: the Approximate test is highly effective for detecting bias. However, it may produce errors when concluding that a distribution is uniform. Therefore, it's important to watch for potential errors in the Brute-force test, where it may conclude that a bias is present, and the Approximate test, where it may identify the distribution as uniform. These discrepancies are rare and only occur when the bias is analyzed at $95\% (1-p = 0.95)$. We observed one such case at 13 bits($N=6$) and 17 bits($N=6$), respectively, though these occurrences may vary depending on the $p$-value. On the other hand, additional bias was confirmed in the Approximate test, which actually demonstrated better results in the Approximate test. Ultimately, we concluded that the approximate $(N-1)$th order estimator is as reliable as brute-force analysis for interpreting permutations.

\subsection{Correctness of the test with reduced cases }

When $N$ is large, analysis can become challenging in the Approximate test due to time and space constraints. However, the Approximate test has the advantage of being able to reduce the test case size. To examine the effect of these reductions, we compared the full test results with those from tests reduced by factors of 2, 4, 8,...,256 where $N=32$. We analyzed the results from 8 bits to 16 bits using F-Y shuffling by Multiplication and Division with an AES-based RNG to observe the differences in analysis results when the test case was reduced, as shown in Figure~\ref{sample_result}, from (a) to (h). The full test comprises $246016(=32^2\times 31^2 / 4)$ cases.

The results show nearly identical trends from the test reduced by a factor of 2 to the test reduced by a factor of 256. The full-size test has already been presented in Figure~\ref{rng_fast}, and considering these results along with the overall aspect of reduced case tests, we can see that the observation of bias decreases at 10, 11, and 12 bits. However, we can still make a sufficient judgment about the base bit since the y-axis is on a log-2 scale. For instance, the 12 bits result in the test reduced by a factor of 256 is 8 times greater than the expected number of cases.

The Chi-square($\chi^2$) test results for each reduced case test are provided in Table~\ref{compare_reduce_table}. These results do not appear to differ substantially from the overall size test results of the reduced size test. As discussed in the previous section, the Chi-square($\chi^2$) test results do not account for a few outlier results, so we present the overall significant statistical processing. While there may be minor discrepancies between the full test results and those of the test reduced by a factor of 256, it's crucial to consider that the latter results were obtained by reducing the time complexity by a factor of 256. This means that a task that would have taken 256 days can now be completed in a day. Therefore, by collecting permutations uniformly and running multiple tests on the reduced cases to integrate the results, we can achieve meaningful outcomes.

\begin{figure*}
     \centering
     \begin{subfigure}[b]{0.40\textwidth}
         \centering
         \caption{}
         \includegraphics[width=\textwidth]{./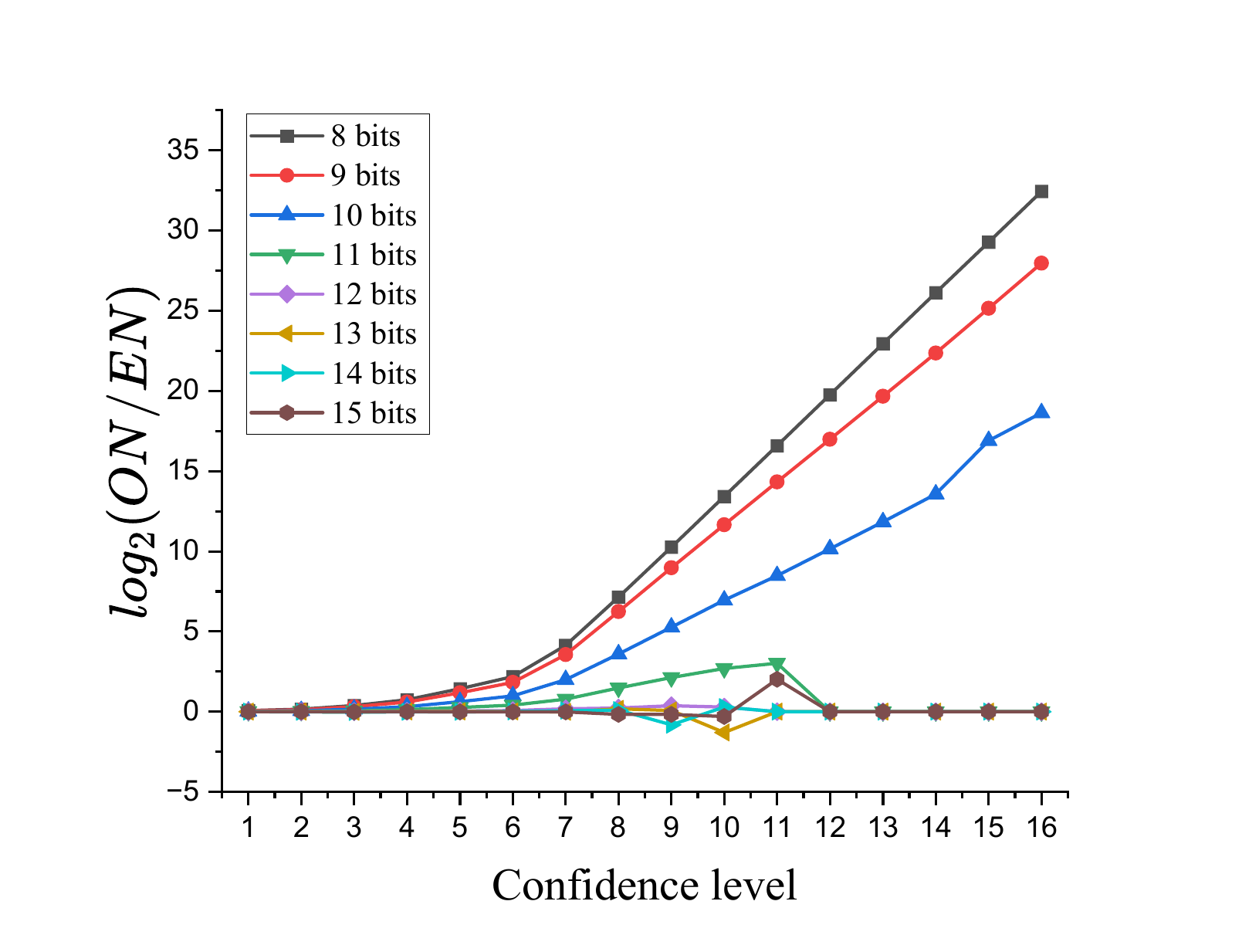}
         
         \label{fig:y equals x}
     \end{subfigure}
     \hfill
     \begin{subfigure}[b]{0.40\textwidth}
         \centering
         \caption{}
         \includegraphics[width=\textwidth]{./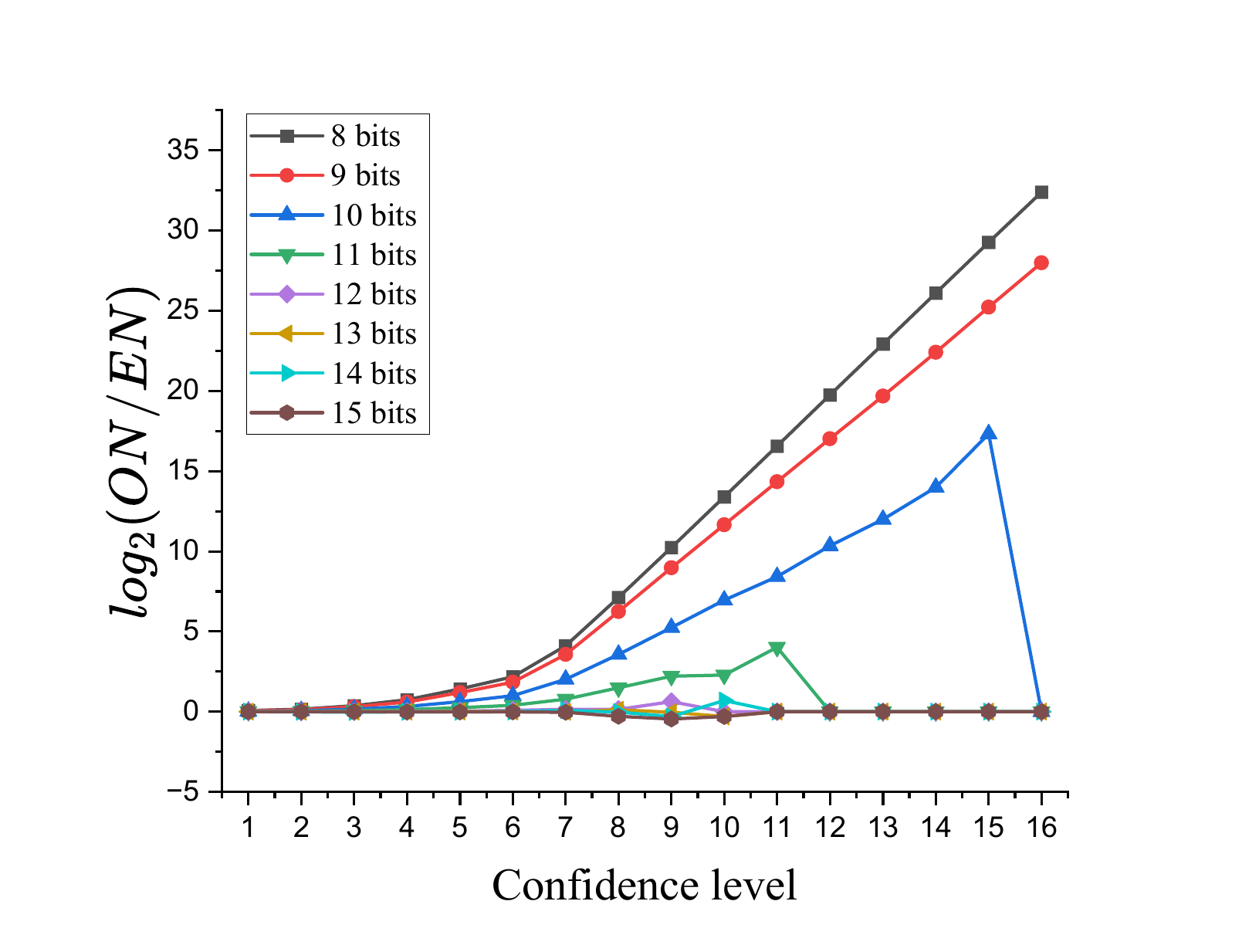}
         
         \label{fig:three sin x}
     \end{subfigure}
     \hfill
     \begin{subfigure}[b]{0.40\textwidth}
         \centering
         \caption{}
         \includegraphics[width=\textwidth]{./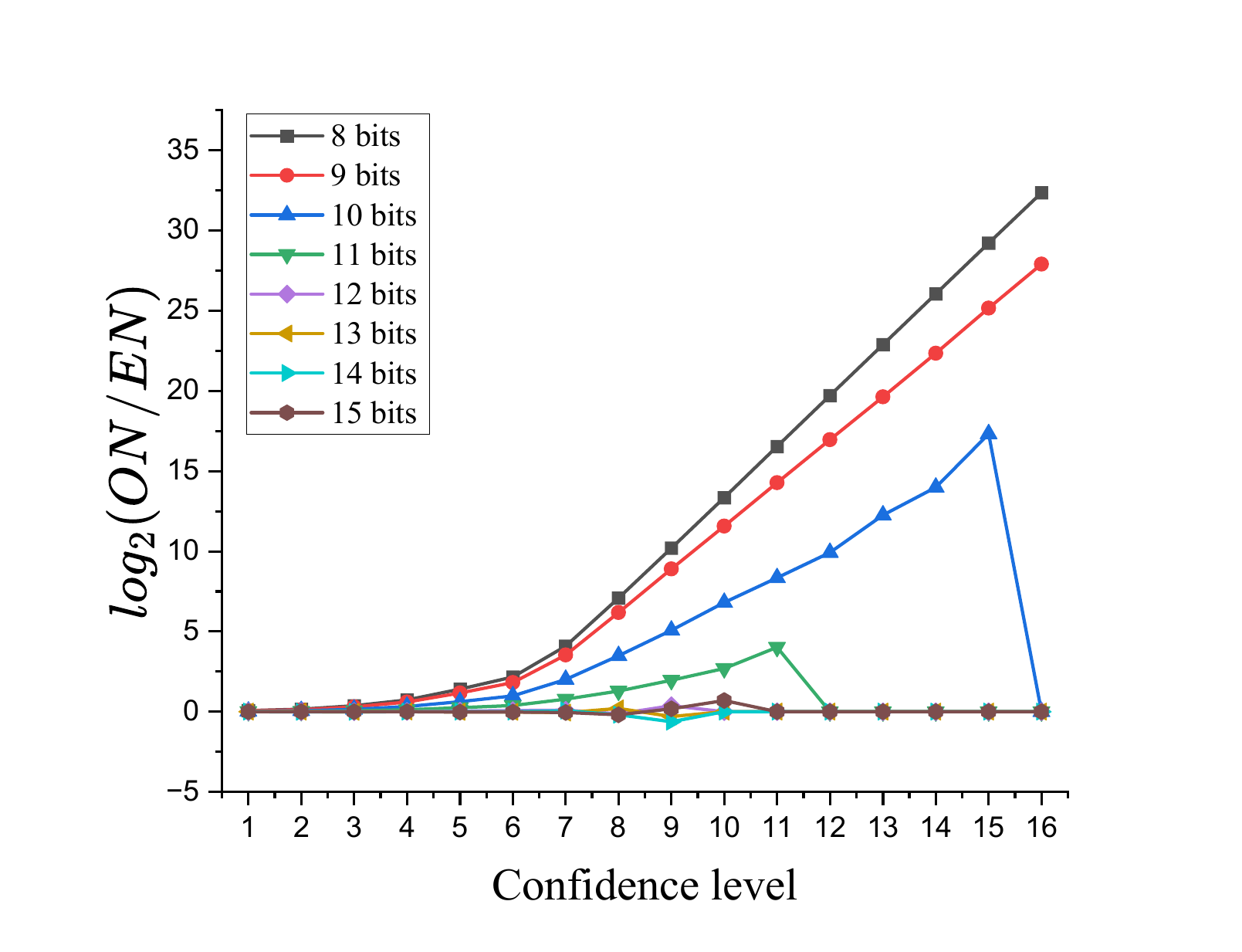}
         
         \label{fig:five over x}
     \end{subfigure}
     \hfill
     \begin{subfigure}[b]{0.40\textwidth}
         \centering
         \caption{}
         \includegraphics[width=\textwidth]{./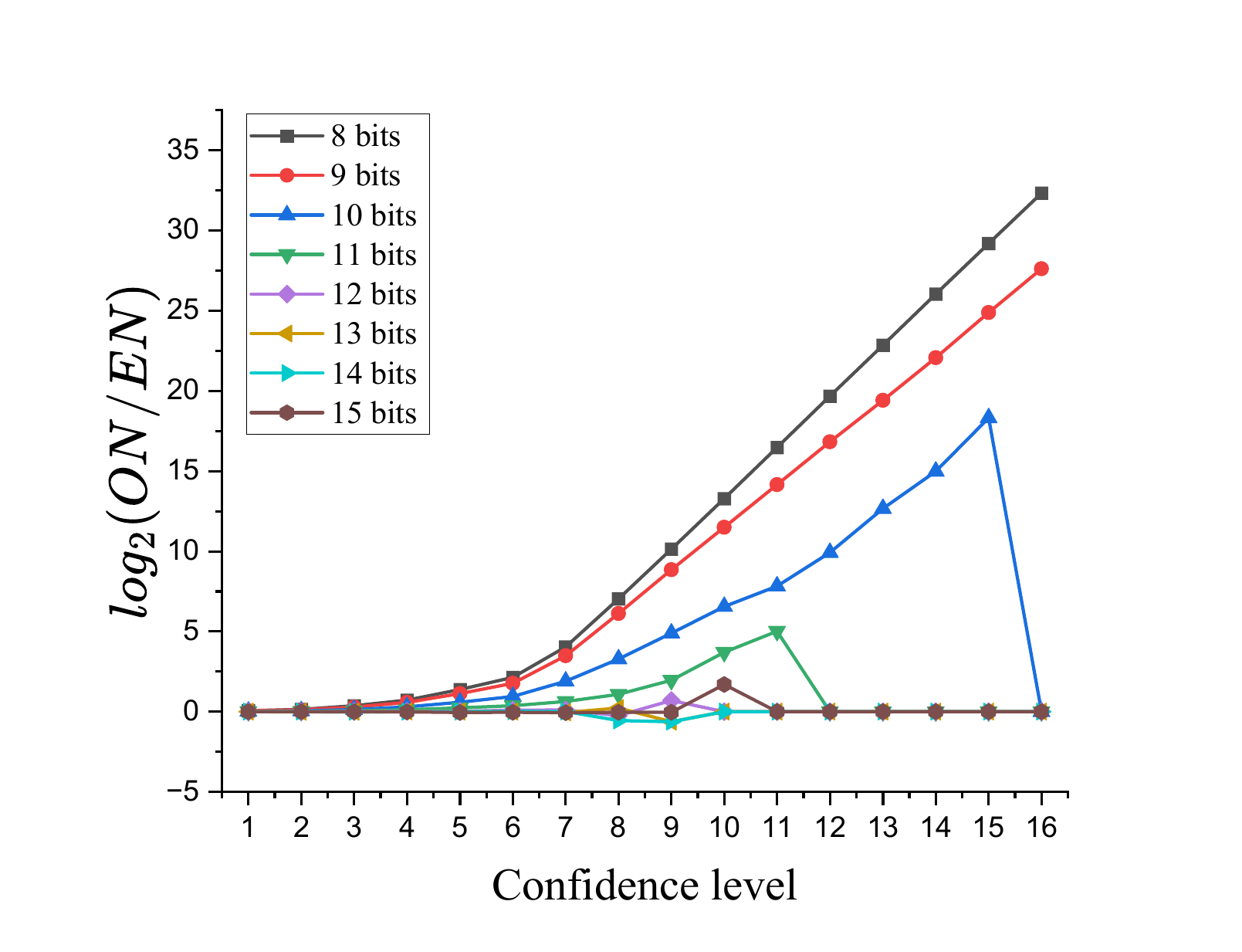}
         
         \label{fig:five over x}
     \end{subfigure}
          \hfill
     \begin{subfigure}[b]{0.40\textwidth}
         \centering
         \caption{}
         \includegraphics[width=\textwidth]{./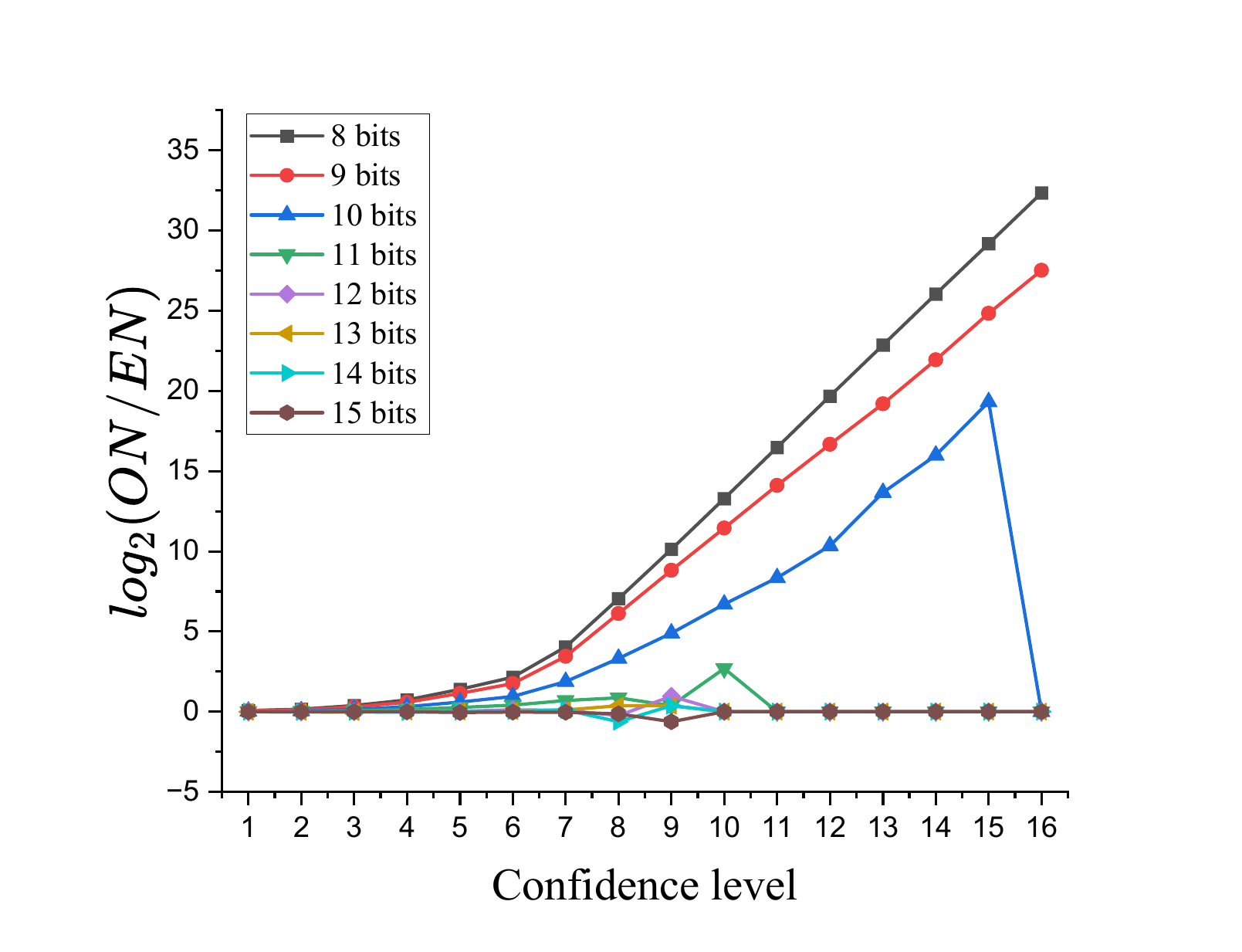}
         
         \label{fig:five over x}
     \end{subfigure}
     \hfill
     \begin{subfigure}[b]{0.40\textwidth}
         \centering
         \caption{}
         \includegraphics[width=\textwidth]{./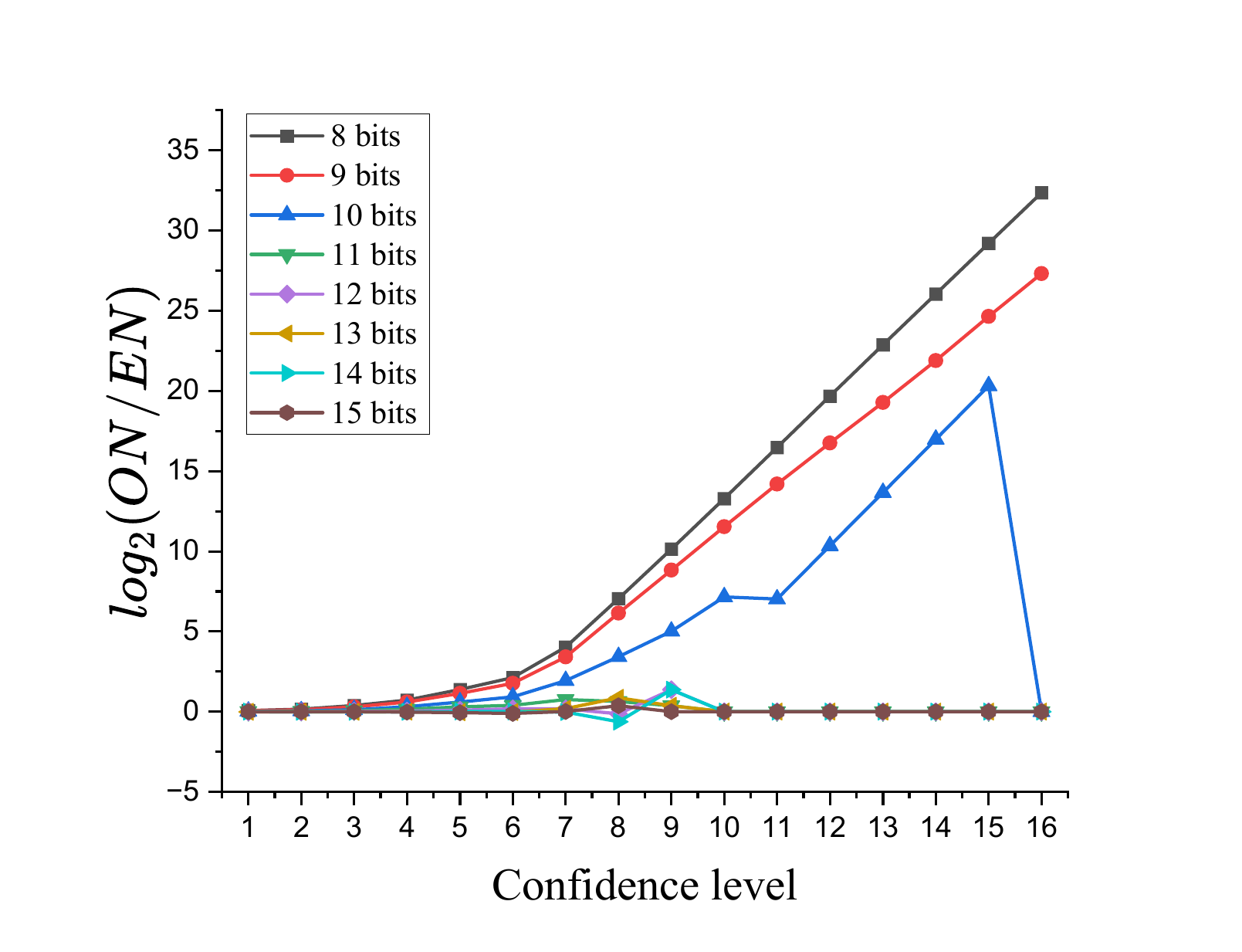}
         
         \label{fig:five over x}
     \end{subfigure}
     \hfill
     \begin{subfigure}[b]{0.40\textwidth}
         \centering
         \caption{}
         \includegraphics[width=\textwidth]{./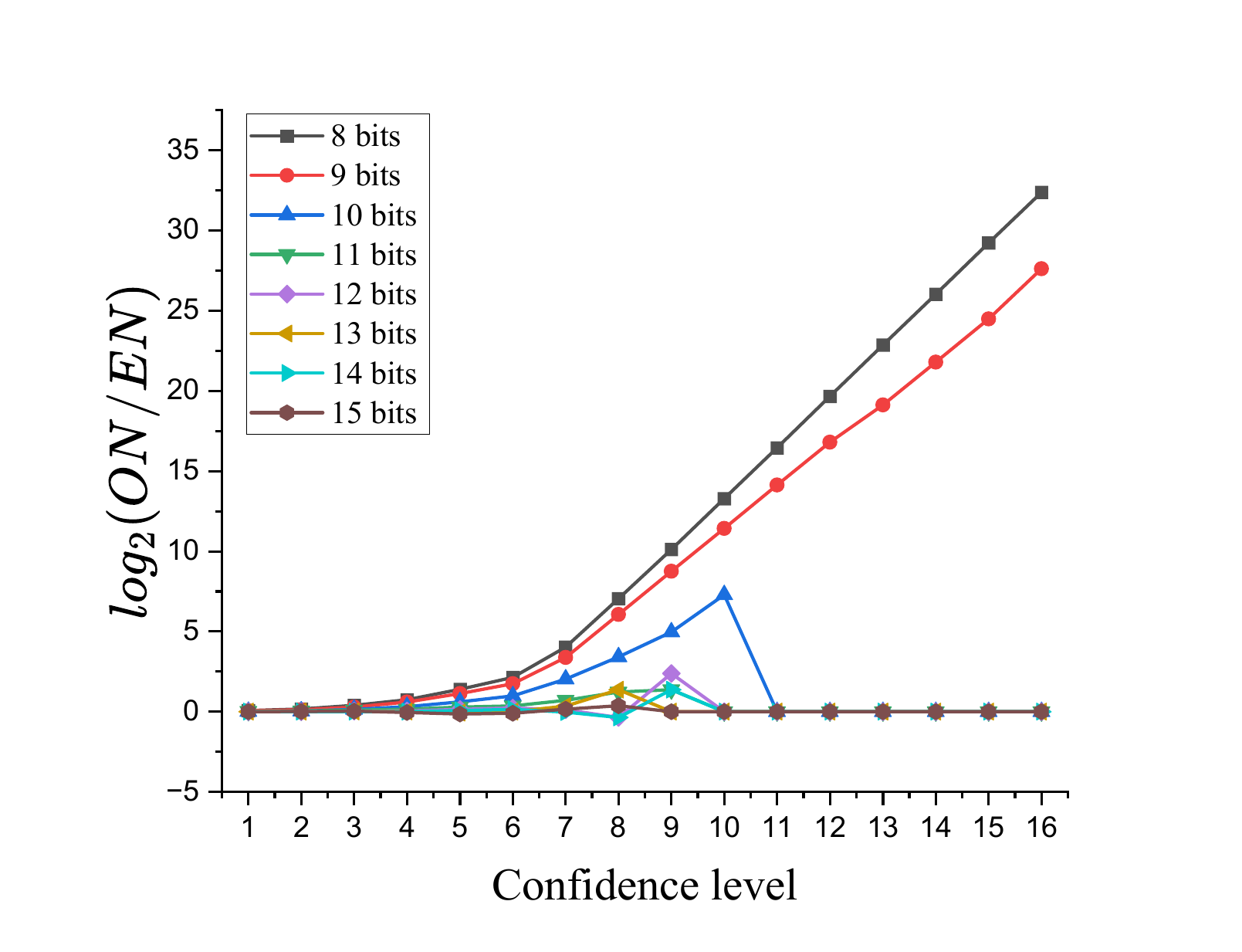}
         
         \label{fig:five over x}
     \end{subfigure}
     \hfill
     \begin{subfigure}[b]{0.40\textwidth}
         \centering
         \caption{}
         \includegraphics[width=\textwidth]{./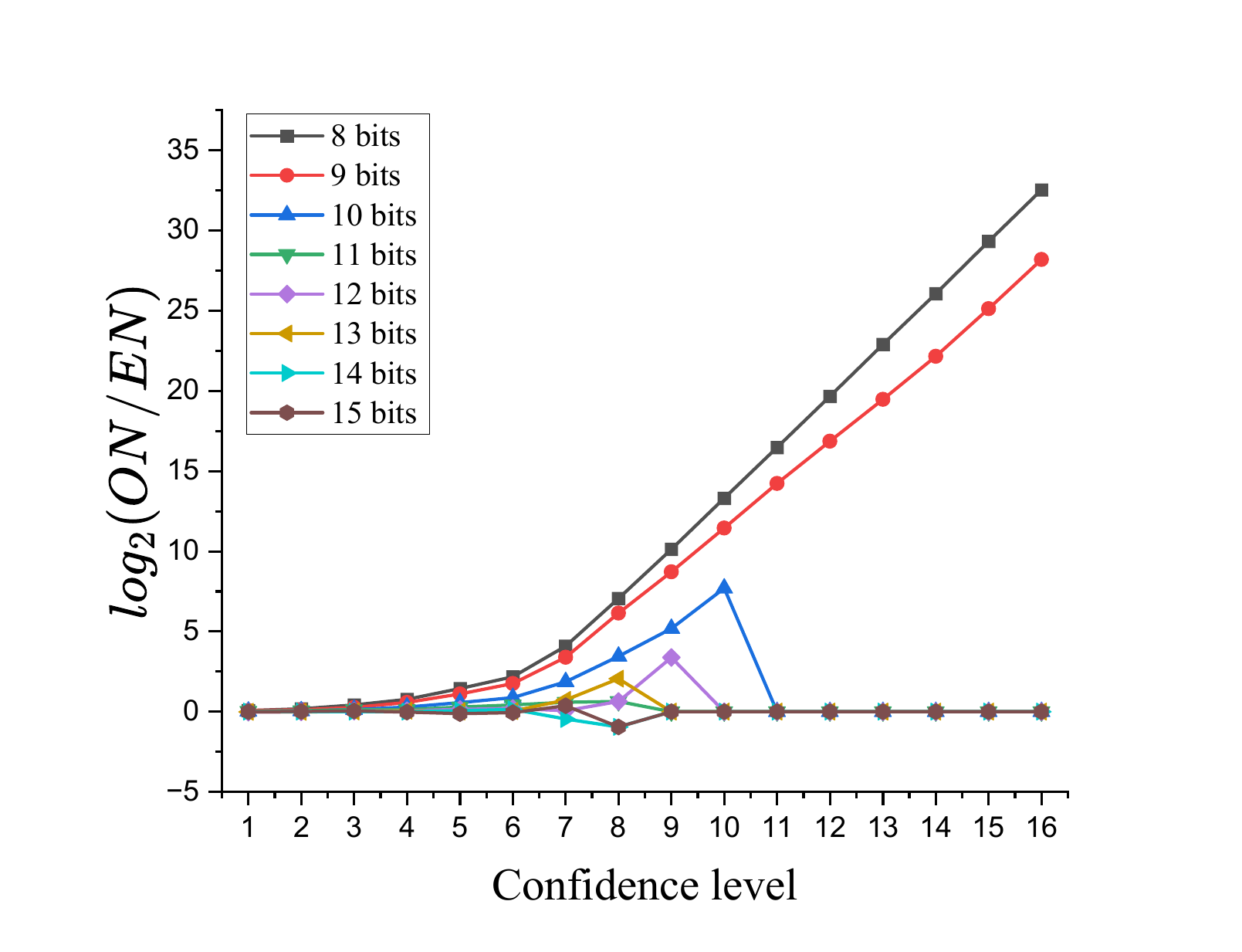}
         
         \label{fig:five over x}
     \end{subfigure}

        \caption{Results by reduced cases of test, (a)$\beta/2$, (b)$\beta/4$, (c)$\beta/8$, (d)$\beta/16$, (e)$\beta/32$, (f)$\beta/64$, (g)$\beta/128$, and (h)$\beta/256$ where $\beta$ is the number of cases; $N^2(N-1)^2/4$ ; y-axis is ratio of (ON/EN) and x-axis is the confidence level given by Table~\ref{xaxis}. Having a high value implies having a high bias (a bad outcome).}
        \label{sample_result}
        
\end{figure*}

\begin{table*}

\centering
\begin{tabular}{|p{25pt}|p{33pt}|p{33pt}|p{33pt}|p{33pt}|p{33pt}|p{33pt}|p{33pt}|p{33pt}|p{33pt}|}
\hline
\hline
RNG Bits & Full($\beta$) & $\beta/2$ & $\beta/4$ & $\beta/8$ & $\beta/16$ & $\beta/32$ & $\beta/64$ & $\beta/128$ & $\beta/256$ \\
\hline
8 & \underline{$<10^{-5}$}&   \underline{$<10^{-5}$}&   \underline{$<10^{-5}$}&   \underline{$<10^{-5}$}&   \underline{$<10^{-5}$}&   \underline{$<10^{-5}$}&   \underline{$<10^{-5}$}&   \underline{$<10^{-5}$}&   \underline{$<10^{-5}$}\\
9 & \underline{$<10^{-5}$}&   \underline{$<10^{-5}$}&   \underline{$<10^{-5}$}&   \underline{$<10^{-5}$}&   \underline{$<10^{-5}$}&   \underline{$<10^{-5}$}&   \underline{$<10^{-5}$}&   \underline{$<10^{-5}$}&   \underline{$<10^{-5}$}\\
10& \underline{$<10^{-5}$}&   \underline{$<10^{-5}$}&   \underline{$<10^{-5}$}&   \underline{$<10^{-5}$}&   \underline{$<10^{-5}$}&   \underline{$<10^{-5}$}&   \underline{$<10^{-5}$}&   \underline{$<10^{-5}$}&   \underline{$<10^{-5}$}\\
11& \underline{$<10^{-5}$}&   \underline{$<10^{-5}$}&   \underline{$<10^{-5}$}&   \underline{$<10^{-5}$}&   \underline{$<10^{-5}$}&   \underline{$<10^{-5}$}&   \underline{$<10^{-5}$}&   \underline{$<10^{-5}$}&  \underline{0.00003}\\
12& \underline{$<10^{-5}$}&   \underline{$<10^{-5}$}&   \underline{$<10^{-5}$}&  \underline{0.00005}&  \underline{0.00089}&  \underline{0.02429}&  \underline{0.00434}&  \underline{0.01253}&  0.07336\\
13& \underline{0.00062}&  \underline{0.00885}&  0.05110&  0.43300&  0.42282&  0.22490&  0.44989&  0.31136&  \underline{0.03796}\\
14&0.27722&  0.09129&  0.39604&  0.39302&  0.43122&  0.45117&  0.60578&  0.19814&  0.52007\\
15&0.58224&  0.29339&  0.68964&  0.63545&  0.70696&  0.61596&  0.70832&  0.63846&  0.40753\\
\hline
\end{tabular}
\caption{$1-$($p$-value) of Chi-square($\chi^2$) test by `reduced cases' by Brute-force test and Approximate $(N-1)$th order estimation. \\
(\underline{underline} number shows that $1-$($p$-value) is under $0.05$)}
\label{compare_reduce_table}

\end{table*}

\subsection{Discussion of the Experimental result}

According to Experimental result, we knows following main findings:   \\




\begin{itemize}
\item The Approximate $(N-1)$th order permutation estimator works effectively.
\end{itemize}

Our findings from the Approximate Test are as follows:

\begin{itemize}

\item \textbf{(Poor Random Source)} A poor random source significantly impacts the generation of random permutations with F-Y shuffling implementation. 
\item \textbf{(Application of Short Random Numbers)} We estimate that an additional 6 bits are needed to achieve a uniform distribution. For example, $N=32$, which requires 5 bits domain space, we should use at least an 11-bit random number for F-Y shuffling, whether we're using division or reduction.
\item \textbf{(Incorrect Implementation)} The NAIVE and Sattolo methods are not good choices, even with sufficient length and true random numbers. Particularly, the NAIVE method, which is widely used due to its fast speed and simple implementation, is not recommended due to its provable attack complexity. Other normal F-Y implementations generate uniform permutations. 

\end{itemize}

In conclusion, the incorrect implementations exemplified by the NAIVE and Sattolo methods should not be chosen, even when using true random numbers. Although the NAIVE method is popular for its speed and simplicity, it isn't advisable for use due to its vulnerability to attack.

\section{Conclusion}


In this research, we developed criteria for verifying perfect random permutation: namely, $(N-1)$th order estimation with normal distribution estimation. To establish this theory, we defined the $n$-th order permutation and a select function. We also proved that $(N-1)$th order permutation equals perfect random security. Additionally, based on the facts derived during the criteria-setting process, we found that higher order verification can replace lower order verification. For example, by verifying the 2nd order permutation, the 1st order permutation is naturally verified, saving verification time. The $(N-1)$th order permutation estimator is easy to implement, but it is not suitable for small sample sizes. Therefore, we provided an approximate definition and showed experimental results for commonly used permutation algorithms. We illustrated how three factors—incorrect implementation, poor random sources, and short random number applications—can lead to biased permutations. We demonstrated how each of these factors significantly influences the final graph of Approximate $(N-1)$th order permutations. Additionally, we provided simple guidelines for implementing F-Y shuffling.

In the era of Post-Quantum Cryptography (PQC), we anticipate that hiding countermeasures will be more widely utilized for PQC algorithms. Further research into hiding countermeasures and permutations is needed.

\noindent For future work, we plan to conduct the following studies:

\begin{itemize}
\item Investigate the practical effects of biased permutations on single and multi-trace attacks.
\item Develop a zero-redundancy, hardware-combined secure implementation for PQC without masking.
\item Create more efficient and perfect estimators for random permutations.
\end{itemize}



\begin{thebibliography}{00}

\bibitem{SOR94} Shor, Peter W. "Algorithms for quantum computation: discrete logarithms and factoring." In Proceedings 35th annual symposium on foundations of computer science, pp. 124-134. Ieee, 1994.

\bibitem{INDO_shuf} Pessl, Peter. "Analyzing the shuffling side-channel countermeasure for lattice-based signatures." In Progress in Cryptology–INDOCRYPT 2016: 17th International Conference on Cryptology in India, Kolkata, India, December 11-14, 2016, Proceedings 17, pp. 153-170. Springer International Publishing, 2016.

\bibitem{DPA} Kocher, Paul, Joshua Jaffe, and Benjamin Jun. "Differential power analysis." In Advances in Cryptology—CRYPTO’99: 19th Annual International Cryptology Conference Santa Barbara, California, USA, August 15–19, 1999 Proceedings 19, pp. 388-397. Springer Berlin Heidelberg, 1999.

\bibitem{timming} Kocher, Paul C. "Timing attacks on implementations of Diffie-Hellman, RSA, DSS, and other systems." In Advances in Cryptology—CRYPTO’96: 16th Annual International Cryptology Conference Santa Barbara, California, USA August 18–22, 1996 Proceedings 16, pp. 104-113. Springer Berlin Heidelberg, 1996.

\bibitem{CPA} Brier, Eric, Christophe Clavier, and Francis Olivier. "Correlation power analysis with a leakage model." In Cryptographic Hardware and Embedded Systems-CHES 2004: 6th International Workshop Cambridge, MA, USA, August 11-13, 2004. Proceedings 6, pp. 16-29. Springer Berlin Heidelberg, 2004.

\bibitem{template} Chari, Suresh, Josyula R. Rao, and Pankaj Rohatgi. "Template attacks." In Cryptographic Hardware and Embedded Systems-CHES 2002: 4th International Workshop Redwood Shores, CA, USA, August 13–15, 2002 Revised Papers 4, pp. 13-28. Springer Berlin Heidelberg, 2003.

\bibitem{SPA} Mangard, Stefan. "A simple power-analysis (SPA) attack on implementations of the AES key expansion." In Information Security and Cryptology—ICISC 2002: 5th International Conference Seoul, Korea, November 28–29, 2002 Revised Papers 5, pp. 343-358. Springer Berlin Heidelberg, 2003.

\bibitem{collision} Bogdanov, Andrey, Ilya Kizhvatov, and Andrey Pyshkin. "Algebraic methods in side-channel collision attacks and practical collision detection." In Progress in Cryptology-INDOCRYPT 2008: 9th International Conference on Cryptology in India, Kharagpur, India, December 14-17, 2008. Proceedings 9, pp. 251-265. Springer Berlin Heidelberg, 2008.

\bibitem{ASCA} Renauld, Mathieu, and François-Xavier Standaert. "Algebraic side-channel attacks." In International Conference on Information Security and Cryptology, pp. 393-410. Berlin, Heidelberg: Springer Berlin Heidelberg, 2009.

\bibitem{single_pqc} Pessl, Peter, and Robert Primas. "More practical single-trace attacks on the number theoretic transform." In Progress in Cryptology–LATINCRYPT 2019: 6th International Conference on Cryptology and Information Security in Latin America, Santiago de Chile, Chile, October 2–4, 2019, Proceedings 6, pp. 130-149. Springer International Publishing, 2019.

\bibitem{single_KMU} Sim, Bo-Yeon, Jihoon Kwon, Joohee Lee, Il-Ju Kim, Tae-Ho Lee, Jaeseung Han, Hyojin Yoon, Jihoon Cho, and Dong-Guk Han. "Single-trace attacks on message encoding in lattice-based KEMs." IEEE Access 8 (2020): 183175-183191.

\bibitem{SASCA} Veyrat-Charvillon, Nicolas, Benoît Gérard, and François-Xavier Standaert. "Soft analytical side-channel attacks." In Advances in Cryptology–ASIACRYPT 2014: 20th International Conference on the Theory and Application of Cryptology and Information Security, Kaoshiung, Taiwan, ROC, December 7-11, 2014. Proceedings, Part I 20, pp. 282-296. Springer Berlin Heidelberg, 2014.

\bibitem{book} Mangard, Stefan, Elisabeth Oswald, and Thomas Popp. Power analysis attacks: Revealing the secrets of smart cards. Vol. 31. Springer Science \& Business Media, 2008.

\bibitem{k_trace}  Hamburg, Mike, Julius Hermelink, Robert Primas, Simona Samardjiska, Thomas Schamberger, Silvan Streit, Emanuele Strieder, and Christine van Vredendaal. "Chosen ciphertext k-trace attacks on masked cca2 secure kyber." IACR Transactions on Cryptographic Hardware and Embedded Systems (2021): 88-113.

\bibitem{Shu_chi} Mitchell, Rory, Daniel Stokes, Eibe Frank, and Geoffrey Holmes. "Bandwidth-optimal random shuffling for GPUs." ACM Transactions on Parallel Computing 9, no. 1 (2022): 1-20.


\bibitem{shuffling} Veyrat-Charvillon, Nicolas, Marcel Medwed, Stéphanie Kerckhof, and François-Xavier Standaert. "Shuffling against side-channel attacks: A comprehensive study with cautionary note." In Advances in Cryptology–ASIACRYPT 2012: 18th International Conference on the Theory and Application of Cryptology and Information Security, Beijing, China, December 2-6, 2012. Proceedings 18, pp. 740-757. Springer Berlin Heidelberg, 2012.

\bibitem{SE_PQC}  Bettale, Luk, Simon Montoya, and Guénaël Renault. "Safe-error analysis of post-quantum cryptography mechanisms-short paper." In 2021 Workshop on Fault Detection and Tolerance in Cryptography (FDTC), pp. 39-44. IEEE, 2021.

\bibitem{SE_O} Yen, Sung-Ming, and Marc Joye. "Checking before output may not be enough against fault-based cryptanalysis." IEEE Transactions on computers 49, no. 9 (2000): 967-970.

\bibitem{NIST20} NIST, PQC Selected Algorithms [Online] Available :  https://csrc.nist.gov/Projects/post-quantum-cryptography/selected-algorithms-2022. Accessed on: Otc 2022. 

\bibitem{NIST22} NIST, NIST round 4 submission, [Online] Available : https://csrc.nist.gov/Projects/post-quantum-cryptography/round-4-submissions. Accessed on: Sep 2022. 

\bibitem{BP_PQC} Hermelink, Julius, Silvan Streit, Emanuele Strieder, and Katharina Thieme. "Adapting belief propagation to counter shuffling of NTTs." IACR Transactions on Cryptographic Hardware and Embedded Systems (2023): 60-88.

\bibitem{bellcore} Boneh, Dan, Richard A. DeMillo, and Richard J. Lipton. "On the importance of checking cryptographic protocols for faults." In International conference on the theory and applications of cryptographic techniques, pp. 37-51. Berlin, Heidelberg: Springer Berlin Heidelberg, 1997.

\bibitem{mix_and_cut} Ristenpart, Thomas, and Scott Yilek. "The mix-and-cut shuffle: small-domain encryption secure against N queries." In Advances in Cryptology–CRYPTO 2013: 33rd Annual Cryptology Conference, Santa Barbara, CA, USA, August 18-22, 2013. Proceedings, Part I, pp. 392-409. Springer Berlin Heidelberg, 2013.

\bibitem{FPE} Bellare, Mihir, Thomas Ristenpart, Phillip Rogaway, and Till Stegers. "Format-preserving encryption." In Selected Areas in Cryptography: 16th Annual International Workshop, SAC 2009, Calgary, Alberta, Canada, August 13-14, 2009, Revised Selected Papers 16, pp. 295-312. Springer Berlin Heidelberg, 2009.

\bibitem{FPE2} Black, John, and Phillip Rogaway. "Ciphers with arbitrary finite domains." In Topics in Cryptology—CT-RSA 2002: The Cryptographers’ Track at the RSA Conference 2002 San Jose, CA, USA, February 18–22, 2002 Proceedings, pp. 114-130. Springer Berlin Heidelberg, 2002.

\bibitem{Knuth} Knuth, Donald E. Art of computer programming, volume 2: Seminumerical algorithms. Addison-Wesley Professional, 2014.

\bibitem{FY} Fisher, Ronald Aylmer, and Frank Yates. Statistical tables for biological, agricultural, and medical research. Hafner Publishing Company, 1953.

\bibitem{per_a1} Prodinger, Helmut. "On the analysis of an algorithm to generate a random cyclic permutation." Ars Combinatoria 65 (2002): 75-78.

\bibitem{per_a2} Mahmoud, Hosam M. "Mixed distributions in Sattolo's algorithm for cyclic permutations via randomization and derandomization." Journal of applied probability 40, no. 3 (2003): 790-796.

\bibitem{MSDN} Microsoft learn, \\https://learn.microsoft.com/en-us/previous-versions/398ax69y(v=vs.140) 

\bibitem{Phd_T}Borga, Jacopo. "Random Permutations--A geometric point of view." arXiv preprint arXiv:2107.09699 (2021).

\bibitem{ref_per1}V. Gonˇcarov. On the field of combinatory analysis. Amer. Math. Soc. Transl. (2), 19:1–46, 1962.

\bibitem{ref_per2}Diaconis, Persi, and Ronald L. Graham. "Spearman's footrule as a measure of disarray." Journal of the Royal Statistical Society Series B: Statistical Methodology 39, no. 2 (1977): 262-268.

\bibitem{ref_per3}Baik, Jinho, Percy Deift, and Kurt Johansson. "On the distribution of the length of the longest increasing subsequence of random permutations." Journal of the American Mathematical Society 12, no. 4 (1999): 1119-1178.

\bibitem{ref_per4}Fulman, Jason. "Stein's method and non-reversible Markov chains." Lecture Notes-Monograph Series (2004): 69-77.

\bibitem{ref_per5} Janson, Svante, Brian Nakamura, and Doron Zeilberger. "On the asymptotic statistics of the number of occurrences of multiple permutation patterns." arXiv preprint arXiv:1312.3955 (2013).

\bibitem{timing_HQC} Guo, Qian, Clemens Hlauschek, Thomas Johansson, Norman Lahr, Alexander Nilsson, and Robin Leander Schröder. "Don’t reject this: Key-recovery timing attacks due to rejection-sampling in HQC and BIKE." IACR Transactions on Cryptographic Hardware and Embedded Systems (2022): 223-263.


\bibitem{master} Vennos, Amy Demetra Geae. "Security of Lightweight Cryptographic Primitives." PhD diss., Virginia Tech, 2021.

\bibitem{secure_BIKE} Sendrier, Nicolas. "Secure sampling of constant-weight words–application to bike." Cryptology ePrint Archive (2021).

\bibitem{ACNS2007} Tillich, Stefan, Christoph Herbst, and Stefan Mangard. "Protecting AES software implementations on 32-bit processors against power analysis." In Applied Cryptography and Network Security: 5th International Conference, ACNS 2007, Zhuhai, China, June 5-8, 2007. Proceedings 5, pp. 141-157. Springer Berlin Heidelberg, 2007.






\end{thebibliography}

\appendix

\section{The proof of Proposition}\label{proof_proposition}
Here is the proof of the Proposition~\ref{proposition_main}. \newline


\noindent\textbf{Proposition~\ref{proposition_main}}. Let the probability \( P(A|B_1) = P(A|B_2) = \dots = P(A|B_n) = k \) and \( B_i \cap B_j = \phi \) for all \( i \neq j \). Then, \( P(A|\bigcup_{i=1}^{n}B_i) = k \).

\begin{proof}
By the definition of conditional probability, we have
\[ P(A|\bigcup_{i=1}^{n}B_i) = \frac{P(A \cap \bigcup_{i=1}^{n}B_i)}{P(\bigcup_{i=1}^{n}B_i)} \]

Using the assumption that \( B_i \cap B_j = \phi \) for all \( i \neq j \), this becomes
\[ P(A|\bigcup_{i=1}^{n}B_i) = \frac{\bigcup_{i=1}^{n} P(A \cap B_i)}{\sum_{i=1}^{n} P(B_i)} \]

Given that \( P(A \cap B_i) = k \times P(B_i) \), this simplifies to
\[ P(A|\bigcup_{i=1}^{n}B_i) = k \times \frac{\sum_{i=1}^{n} P(B_i)}{\sum_{i=1}^{n} P(B_i)} = k \]
\end{proof}

\section{The proof of the Remark}
\subsection{The proof of Remark}\label{proof_theorem}
Here is the proof of the Remark~\ref{main_theorem}. We need a lemma to prove it. 

\begin{lemma}\label{lemma1}
If \( (B_i \cap B_j) = \phi \) for all \( i \neq j \) and \( P(\bigcup_{i} B_i | A) = 1 \), then \( \sum_{i} P(A \cap B_i) = P(A) \).
\end{lemma}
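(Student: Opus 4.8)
The plan is to reduce the statement to a direct application of the definition of conditional probability together with the additivity of the probability measure over disjoint events; no machinery beyond the measure axioms is needed. First I would unwind the hypothesis $P(\bigcup_i B_i \mid A) = 1$. By the definition of conditional probability (which presupposes $P(A) > 0$) this reads
$$
\frac{P\big(A \cap \bigcup_i B_i\big)}{P(A)} = 1,
$$
and hence $P\big(A \cap \bigcup_i B_i\big) = P(A)$. This identity is the crux of the argument: conditioned on $A$, the family $\{B_i\}$ captures all of the probability mass of $A$.

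Next I would rewrite the left-hand side by distributing the intersection over the union, $A \cap \bigcup_i B_i = \bigcup_i (A \cap B_i)$, and record that the events $A \cap B_i$ inherit pairwise disjointness from the $B_i$: indeed $(A \cap B_i) \cap (A \cap B_j) = A \cap (B_i \cap B_j) = A \cap \phi = \phi$ whenever $i \neq j$. Consequently the (finite, or countable) additivity of the probability measure applies and gives
$$
P\Big(\bigcup_i (A \cap B_i)\Big) = \sum_i P(A \cap B_i).
$$
Chaining this with the identity from the first step yields $\sum_i P(A \cap B_i) = P\big(A \cap \bigcup_i B_i\big) = P(A)$, which is exactly the claim.

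I do not expect any serious obstacle; the argument is essentially a two-line manipulation, and the only points demanding care are bookkeeping rather than conceptual. One must note that the conditional probability in the hypothesis requires $P(A) > 0$, and one must invoke additivity at the correct level of generality---finite additivity suffices in the intended application, where the index set ranges over finitely many disjoint conditions, whereas a countable index set would call for countable additivity. Once the disjointness of the $A \cap B_i$ is established, the remainder follows automatically, and the lemma is then ready to be fed into the proof of Remark~\ref{main_theorem}.
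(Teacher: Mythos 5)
Your proof is correct and follows essentially the same route as the paper's: both establish pairwise disjointness of the events $A \cap B_i$ to invoke additivity, and both convert the hypothesis $P(\bigcup_i B_i \mid A) = 1$ into $P(A \cap \bigcup_i B_i) = P(A)$ via the definition of conditional probability. Your explicit remarks about requiring $P(A) > 0$ and about finite versus countable additivity are minor refinements the paper leaves implicit.
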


\begin{proof}
Firstly, observe that \( (A \cap B_i) \cap (A \cap B_j) = \phi \) for all \( i \neq j \) due to the disjoint nature of \( B_i \) sets. Hence, 
\[ \sum_{i} P(A \cap B_i) = P\left(\bigcup_{i} (A \cap B_i)\right) \]

Using the properties of conditional probability, we get
\[ P(A \cap \bigcup_{i} B_i) = P(\bigcup_{i} B_i | A) \times P(A) \]

Given that \( P(\bigcup_{i} B_i | A) = 1 \), it follows that
\[ \sum_{i} P(A \cap B_i) = P(A) \]
\end{proof}

\noindent\textbf{Remark~\ref{main_theorem}}. \textit{A select function $\gamma$ has the $k$-th order permutation for all $k\leq N$, then $\gamma$ provides $(k-1)$th order permutation. }

\begin{proof}
Assume that \(\gamma\) provides a \(k\)-th order permutation. According to definition~\ref{main_definition}, given a randomly selected \(t\) such that \(\gamma(t) = f_m\), the probability is
\[ P(f_m(x) = y | \bigcap_{j=1}^{k-1} f_m(a_j) = b_j) = \frac{1}{N-k+1} \]
for a specified \(m \in \mathbb{Z}_{N!}\) and for all \(x, y, a_i, b_i \in \mathbb{Z_N}\) with \(x \neq a_i\) and \(y \neq b_i\).

Our goal is to show that
\[ P(f_m(x) = y | \bigcap_{i=1}^{k-2} f_m(a_i) = b_i) = \frac{1}{N-k+2} \]
under analogous constraints for \(m, x, y, a_i, b_i\).

For simplicity, denote
\[ P(f_m(x) = y | \bigcap_{i=1}^{k-1} f_m(a_i) = b_i) = P(A | C_1 C_2 \dots C_{k-1}) \]
where \(A\) is any event satisfying the definition. Since \(\gamma\) ensures a \(k\)-th order permutation, the conditional probabilities equate as:
\[ P(A | C_1 C_2 \dots C_{k-1}) = P(C_1 | A C_2 \dots C_{k-1}) \]

Expressing this in terms of intersections:
\[ \frac{P(A \cap C_1 \cap \dots \cap C_{k-1})}{P(C_1 \cap \dots \cap C_{k-1})} = \frac{P(A \cap C_1 \cap \dots \cap C_{k-1})}{P(A \cap C_2 \cap \dots \cap C_{k-1})} \]

From which, we deduce that
\[ P(C_1 \cap \dots \cap C_{k-1}) = P(A \cap C_2 \cap \dots \cap C_{k-1}) \]
implying that each term in the intersection \(P(\bigcap_{i=1}^{k-1} f_m(a_i) = b_i)\) has equivalent probability.

Considering \(k-2\) fixed conditions and varying one, define \(\delta = \bigcap_{i=1}^{k-2} f_m(a_i) = b_i\). In this case, \(f_m(x')\) can take on \(N-k+2\) distinct values. Enumerating these values by \(f(x') = y_{e_l}\) and noting their mutual exclusivity (by Lemma~\ref{lemma1}), we have:
\[ \sum_{l=1}^{N-k+2} P(f_m(x') = y_{e_l} \cap \delta) = P(\delta) \]

Given the equivalence of terms \(P(f_m(x') = y_e \cap \delta)\), we deduce:
\[ P(f_m(x') = y_{e_1} \cap \delta) = \frac{P(\delta)}{N-k+2} \]
and hence,
\[ \frac{P(f_m(x') = y_{e_1} \cap \delta)}{P(\delta)} = \frac{1}{N-k+2} \]

Thus, we arrive at:
\[ \frac{P(f_m(x') = y_{k-1} \cap \bigcap_{i=1}^{k-2} f_m(a_i) = b_i)}{P(\bigcap_{i=1}^{k-2} f_m(a_i) = b_i)} = \frac{1}{N-k+2} \]

Consequently, by definition~\ref{main_definition}, \(\gamma\) offers a \((k-1)\)-th order permutation.
\end{proof}

\section{An example for the second order permutation but not the third order permutation with five-indices}\label{App:Example_2nd}
\begin{center}
\noindent\{1,2,3,4,5\},
\{1,2,3,5,4\},
\{1,3,2,4,5\},
\{1,3,2,5,4\},\\
\{1,4,5,2,3\},
\{1,4,5,3,2\},
\{1,5,4,2,3\},
\{1,5,4,3,2\},\\
\{2,1,4,3,5\},
\{2,1,4,5,3\},
\{2,3,5,1,4\},
\{2,3,5,4,1\},\\
\{2,4,1,3,5\},
\{2,4,1,5,3\},
\{2,5,3,1,4\},
\{2,5,3,4,1\},\\
\{3,1,5,2,4\},
\{3,1,5,4,2\},
\{3,2,4,1,5\},
\{3,2,4,5,1\},\\
\{3,4,2,1,5\},
\{3,4,2,5,1\},
\{3,5,1,2,4\},
\{3,5,1,4,2\},\\
\{4,1,3,2,5\},
\{4,1,3,5,2\},
\{4,2,5,1,3\},
\{4,2,5,3,1\},\\
\{4,3,1,2,5\},
\{4,3,1,5,2\},
\{4,5,2,1,3\},
\{4,5,2,3,1\},\\
\{5,1,2,3,4\},
\{5,1,2,4,3\},
\{5,2,1,3,4\},
\{5,2,1,4,3\},\\
\{5,3,4,1,2\},
\{5,3,4,2,1\},
\{5,4,3,1,2\},
\{5,4,3,2,1\}. 
\end{center}

\end{document}